\numberwithin{equation}{section}
\DeclareMathOperator{\E}{\mathbb{E}}
\DeclareMathOperator{\Var}{Var}
\DeclareMathOperator{\diam}{diam}
\DeclareMathOperator{\sign}{sign}
\DeclareMathOperator{\Lap}{Lap}
\DeclareMathOperator{\flux}{flux}
\renewcommand{\Pr}[2][]{\mathbb{P}_{#1} \left\{ #2 \rule{0mm}{3mm}\right\}}
\def \R {\mathbb{R}}
\def \Z {\mathbb{Z}}
\def \MM {\mathcal{M}}
\def \XX {\mathcal{X}}
\def \YY {\mathcal{Y}}
\def \ZZ {\mathcal{Z}}
\def \e {\varepsilon}
\def \d {\delta}
\def \l {\lambda}
\def \s {\sigma}
\newtheorem{theorem}{Theorem}[section]
\newtheorem{proposition}[theorem]{Proposition}
\newtheorem{corollary}[theorem]{Corollary}
\newtheorem{lemma}[theorem]{Lemma}
\newtheorem{definition}[theorem]{Definition}
\theoremstyle{remark}
\newtheorem{remark}[theorem]{Remark}
\newcommand{\cF}{\mathcal{F}}
\newcommand{\cX}{\mathcal{X}}
\newcommand{\cY}{\mathcal{Y}}
\newcommand{\cN}{\mathcal{N}}
\newcommand{\dd}{\mathrm{d}}
\newcommand{\dbl}{d_{\mathrm{BL}}}
\newcommand{\Lip}{\mathrm{Lip}}
\renewcommand{\paragraph}[1]{\subsection*{#1}}
\title{Algorithmically Effective Differentially Private Synthetic Data}
\author{Yiyun He}
\address{Department of Mathematics, University of California Irvine, Irvine, CA 92697}
\email{yiyunh4@uci.edu}
\author{Roman Vershynin}
\address{Department of Mathematics, University of California Irvine, Irvine, CA 92697}
\email{rvershyn@uci.edu}
\author{Yizhe Zhu}
\address{Department of Mathematics, University of California Irvine, Irvine, CA 92697}
\email{yizhe.zhu@uci.edu}
\begin{document}

\maketitle

\begin{abstract}%
 We present a highly effective algorithmic approach for generating $\e$-differentially private synthetic data in a bounded metric space with near-optimal utility guarantees under the  1-Wasserstein distance. In particular, for a dataset $\mathcal X$ in the hypercube $[0,1]^d$, our algorithm  generates synthetic dataset $\mathcal Y$ such that the expected 1-Wasserstein distance between the empirical measure of $\mathcal X$ and $\mathcal Y$ is $O((\e n)^{-1/d})$ for $d\geq 2$, and is $O(\log^2(\e n)(\e n)^{-1})$ for $d=1$. The accuracy guarantee is optimal up to a constant factor for $d\geq 2$, and up to a logarithmic factor for $d=1$. Our algorithm has a fast running time of $O(\e d n)$ for all $d\geq 1$ and demonstrates improved accuracy compared to the method  in \cite{boedihardjo2022private} for $d\geq 2$.
\end{abstract}

% \begin{keywords}%
%   differential privacy, synthetic data, Wasserstein metric%
% \end{keywords}

\section{Introduction}
%====================

Differential privacy has become the benchmark for privacy protection in scenarios where vast amounts of data need to be analyzed. The aim of differential privacy is to prevent the disclosure of information about individual participants in the dataset. In simple terms, an algorithm that has a randomized output and produces similar results when given two adjacent datasets is considered to be differentially private. This method of privacy protection is increasingly being adopted and implemented in various fields, including the 2020 US Census  \cite{abowd2019census,hawes2020implementing,hauer2021differential} and numerous machine learning tasks \cite{dwork2014algorithmic}.

A wide range of data computations can be performed in a differentially private manner, including regression \cite{chaudhuri2008privacy}, clustering \cite{su2016differentially}, parameter estimation \cite{duchi2018minimax}, stochastic gradient descent \cite{song2013stochastic}, and deep learning \cite{abadi2016deep}. However, many existing works on differential privacy focus on designing algorithms for specific tasks and are restricted to queries that are predefined before use. This requires expert knowledge and often involves modifying existing algorithms.

One promising solution to this challenge is to generate a synthetic dataset similar to the original dataset with guaranteed differential privacy \cite{hardt2012simple,blum2013learning,jordon2019pate,bellovin2019privacy,boedihardjo2022covariance,boedihardjo2022privacy,boedihardjo2022private}. As any downstream tasks are based on the synthetic dataset, they can be performed without incurring additional privacy costs. 

\subsection{Private synthetic data} 

Mathematically, the problem of generating private synthetic data can be defined as follows. Let $(\Omega,\rho)$ be a metric space. Consider a dataset $\mathcal X=(X_1,\dots, X_n)\in \Omega^n$. Our goal is to construct an efficient randomized algorithm that outputs differentially private synthetic data $\YY=(Y_1,\dots,Y_m)\in \Omega^m$  such that the two empirical measures
\[  \mu_{\XX}=\frac{1}{n} \sum_{i=1}^n \delta_{X_i} \quad \text{and} \quad \mu_{\YY}=\frac{1}{m} \sum_{i=1}^m \delta_{Y_i}\]
 are close to each other. We measure the utility of the output by $\E W_1(\mu_{\XX},\mu_{\YY})$, where $W_1(\mu_{\XX},\mu_{\YY})$ is  the 1-Wasserstein distance, and the expectation is taken over the randomness of the algorithm.  The Kantorovich-Rubinstein duality  (see, e.g., \cite{villani2009optimal}) gives an equivalent representation of the 1-Wasserstein distance between two measures $\nu_{\XX}$ and $\mu_{\YY}$:
\begin{equation}
\label{eq: KR duality}
    W_1(\mu_{\XX},\mu_{\YY}) = \sup_{{\mathrm{Lip}(f)}\leq 1} \left(\int f\dd\mu_{\XX}-\int f\dd \mu_{\YY}\right),
\end{equation}
where the supremum is taken over the set of all  $1$-Lipschitz functions on $\Omega$. Since many machine learning algorithms are Lipschitz \cite{von2004distance,kovalev2022lipschitz,bubeck2021universal,meunier2022dynamical}, Equation~\eqref{eq: KR duality} provides a uniform accuracy guarantee for a wide range of machine learning tasks  performed on synthetic datasets whose empirical measure is close to $\mu_{\XX}$ in the 1-Wasserstein distance.

\subsection{Main results}
The most straightforward way to construct differentially private synthetic data  is to add independent  noise to the location of each data point.
However, this method can result in a significant loss of data utility as the amount of noise needed for privacy protection may be too large \cite{domingo2021limits}.  Another direct approach could be to add noise to the density function of the empirical measure of $\XX$, by dividing $\Omega$ into small subregions and perturbing the true counts in each subregion. However, Laplacian noise may perturb the count in a certain subregion to negative, causing the output to become a signed measure. To address this issue, we introduce an algorithmically effective method called the \textit{Private Measure Mechanism}.

\paragraph{Private Measure Mechanism (PMM)}
PMM makes the count zero if the noisy count in a subregion is negative.  Instead of a single partition of $\Omega$, we consider a collection of binary hierarchical partitions on $\Omega$ and add inhomogeneous noise to each level of the partition. However, the counts of two subregions do not always add up to
the count of the region at a higher level. We develop an algorithm that enforces the consistency of counts in regions at different levels.  
PMM has $O(\e dn)$ running time while the running time of the approach in  \cite{boedihardjo2022private} is  polynomial in $n$.

The accuracy analysis of PMM uses the hierarchical partitions to estimate the 1-Wasserstein distance in terms of the multi-scale geometry of $\Omega$ and the noise magnitude  in each level of the partition. In particular, when $\Omega=[0,1]^d$, by optimizing the choice of the hierarchical partitions and noise magnitude, PMM achieves  better accuracy compared to \cite{boedihardjo2022private} for $d\geq 2$. The accuracy is optimal  rate up to a constant factor for $d\geq 2$, and up to a logarithmic factor for $d=1$. We state it in the next theorem. 

The  hierarchical partitions appeared in many previous works on the approximation of distributions under Wasserstein distances  in a non-private setting, including \cite{ba2011sublinear,dereich2013constructive,weed2019sharp}. In the differential privacy literature, the hierarchical partitions are also closely related to the binary tree mechanism  \cite{dwork2010differential,chan2011private} for differential privacy under continual
observation. However, the accuracy analysis of the two mechanisms is significantly different. In addition, the  TopDown algorithm in the 2020 census \cite{abowd20222020} also has a similar hierarchical structure and enforces consistency, but the accuracy analysis of the algorithm is not provided in \cite{abowd20222020}.

\begin{theorem}[PMM for data in a hypercube]\label{thm: PMM}
    Let $\Omega=[0,1]^d$ equipped with the $\ell^{\infty}$ metric. PMM outputs an  $\e$-differentially private synthetic dataset $\YY$ in time $O(\e dn)$  such that
    \[\E W_1(\mu_\cX,\mu_{\YY}) \leq \left\{\begin{aligned}
    &C\log^2(\e n)(\e n)^{-1}&\quad \textrm{ if } d=1,\\
    &C {(\e n)^{-\frac{1}{d}}}&\quad \textrm{ if } d\geq 2.
\end{aligned}\right.\]
\end{theorem}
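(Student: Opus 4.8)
The plan is to derive Theorem~\ref{thm: PMM} by running PMM on a dyadic hierarchical partition of the cube with a carefully tuned depth $L$ and per-level privacy split $\e_0,\dots,\e_L$ subject to $\sum_\ell \e_\ell=\e$. Privacy is the easy part: at level $\ell$ the mechanism releases the vector of cell counts corrupted by i.i.d.\ $\Lap(1/\e_\ell)$ noise; replacing one data point changes this vector by $\pm 1$ in a single coordinate (each point lies in exactly one cell of $\PP_\ell$), so level $\ell$ is $\e_\ell$-DP by the Laplace mechanism and basic composition over $\ell=0,\dots,L$ makes the family of noisy counts $\e$-DP. Truncating negative counts to $0$ and the consistency-enforcement pass are deterministic post-processing of those counts and cost nothing.

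For accuracy, build $\PP_0\succ\cdots\succ\PP_L$ on $[0,1]^d$ by bisecting every cell along one coordinate, cycling through the $d$ coordinates, so that $|\PP_\ell|=2^\ell$ and every cell of $\PP_\ell$ has $\ell^\infty$-diameter $r_\ell=2^{-\lfloor \ell/d\rfloor}\le 2\cdot 2^{-\ell/d}$. The technical core is a general ``telescoping over scales'' bound for PMM: with the synthetic mass placed at the cell centers of $\PP_L$,
\[
\E\,W_1(\mu_\XX,\mu_\YY)\ \le\ \frac{C}{n}\sum_{\ell=0}^{L} r_\ell\,\E\!\sum_{R\in\PP_\ell}\bigl|Z_R^{(\ell)}\bigr|\ +\ r_L\ \le\ \frac{C}{n}\sum_{\ell=0}^{L} r_\ell\,\frac{|\PP_\ell|}{\e_\ell}\ +\ r_L,
\]
where $Z_R^{(\ell)}\sim\Lap(1/\e_\ell)$, using that each unit of mass misplaced at scale $\ell$ is transported a distance $O(r_\ell)$, that truncation only shrinks per-cell errors, that consistency enforcement inflates them by at most an absolute constant, and that $\E|\Lap(1/\e_\ell)|=1/\e_\ell$. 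Plugging in $|\PP_\ell|=2^\ell$ and $r_\ell\asymp 2^{-\ell/d}$ gives $\E\,W_1(\mu_\XX,\mu_\YY)\lesssim n^{-1}\sum_{\ell=0}^L 2^{\ell(1-1/d)}/\e_\ell + 2^{-L/d}$.

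It remains to optimize. Minimizing $\sum_\ell 2^{\ell(1-1/d)}/\e_\ell$ over $\e_\ell\ge0$ with $\sum_\ell\e_\ell=\e$ gives, by Cauchy--Schwarz, $\e_\ell\propto 2^{\ell(1-1/d)/2}$ and optimal value $\e^{-1}\bigl(\sum_{\ell=0}^L 2^{\ell(1-1/d)/2}\bigr)^2$. For $d\ge2$ this is a geometric sum of ratio $2^{(d-1)/(2d)}>1$, hence $\Theta\bigl(2^{L(1-1/d)}\bigr)$ with implied constant bounded uniformly in $d$ (worst case $d=2$), so the bound becomes $\lesssim 2^{L(1-1/d)}/(\e n)+2^{-L/d}$; taking $L=\lceil\log_2(\e n)\rceil$, i.e.\ $2^L\asymp\e n$, equalizes the two terms at $\lesssim(\e n)^{-1/d}$. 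For $d=1$ the ratio is $1$, so the optimal split is uniform $\e_\ell=\e/(L+1)$, the sum equals $(L+1)^2$, and the same choice of $L$ yields $\lesssim\log^2(\e n)/(\e n)$; the regime $\e n\lesssim1$ is vacuous since there $W_1\le\diam\Omega=1$ already lies below the target. The running-time claim follows because this partition has $\sum_{\ell\le L}2^\ell=O(2^L)=O(\e n)$ cells, each data point visits one cell per level, and the noisy-count, truncation, and consistency passes are linear in the number of cells (up to an $O(d)$ bookkeeping factor), for a total of $O(\e dn)$.

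The one genuinely non-routine ingredient is the scale-telescoping estimate, and within it the control of the consistency-enforcement step; I expect this to be the main obstacle. Two points need care: (i) exhibiting an explicit recursive transport plan that redistributes mass level by level and thereby proves $W_1(\mu_\XX,\mu_\YY)\le \sum_\ell r_\ell\cdot(\text{total count discrepancy at scale }\ell)/n+r_L$, which should rely only on the nestedness of the $\PP_\ell$; and (ii) showing that forcing a cell's children to have counts summing to that cell's already perturbed and truncated count changes each child count by no more than the noise already injected at the parent, so that errors do not compound along root-to-leaf paths and the per-scale discrepancy stays $O\bigl(\sum_{R\in\PP_\ell}|Z_R^{(\ell)}|\bigr)$. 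Everything else --- the Laplace first-moment bound, the geometry of dyadic boxes, the privacy composition, and the two elementary optimizations --- is standard.
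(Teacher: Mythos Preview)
Your overall plan---dyadic partition, per-level Laplace noise, composition for privacy, a multiscale telescoping bound for $W_1$, Cauchy--Schwarz to optimize the $\e_\ell$, and depth $L\approx\log_2(\e n)$---matches the paper's route exactly, and your accuracy inequality $\E W_1\lesssim n^{-1}\sum_\ell r_\ell\,|\PP_\ell|/\e_\ell+r_L$ is (up to a harmless level shift) the paper's Theorem~\ref{thm: accuracy}.

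The one place where your intuition would lead you astray is step (ii). The claim that consistency enforcement ``changes each child count by no more than the noise already injected at the parent'' is not what happens: the per-cell discrepancy $|m_\theta-n_\theta|$ is \emph{not} controlled by the noise at any single level and can accumulate along a root-to-leaf path, since each child must absorb whatever deficit or surplus its parent already carries. What the paper bounds instead is the amount of mass transported \emph{between siblings} at each internal node $\theta$; this quantity (the ``flux'') is at most $\max(|\lambda_{\theta 0}|,|\lambda_{\theta 1}|)$, i.e.\ the noise at the \emph{children's} level only. The device that makes this work is the comparability requirement in Algorithm~\ref{alg: consistency}: the consistent pair $(m_{\theta 0},m_{\theta 1})$ must lie coordinatewise above or coordinatewise below the noisy pair $(n'_{\theta 0},n'_{\theta 1})$, so the adjustment is a pure addition or pure subtraction on both children and never a cross-transfer. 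Lemmas~\ref{lem: flux incomparability}--\ref{lem: flux and laplacian} formalize this, and Lemma~\ref{lem: tranform data} builds the explicit recursive transport plan you describe in (i) using only these sibling transfers. So (i) is right; for (ii), track sibling-to-sibling flux rather than per-cell discrepancy, and recognize that the comparability constraint in the consistency pass is the mechanism keeping the flux small.
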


\paragraph{Private Signed Measure Mechanism (PSMM)}

In addition to PMM, we introduce an alternative method, the \textit{Private Signed Measure Mechanism},  that achieves optimal accuracy rate on $[0,1]^d$ when $d\geq 3$ in  $\text{poly}(n)$ time. The analysis of PSMM  is not restricted to 1-Wasserstein distance, and it can be generalized to provide a uniform utility guarantee of other function classes. 

% this can be generalized for other function classes

We first partition the domain $\Omega$ into $m$ subregions $\Omega_1,\dots, \Omega_m$. Perturbing the counts in each subregion with i.i.d. integer Laplacian noise gives an unbiased approximation of $\mu_{\YY}$ with a  signed measure $\nu$. Then we find the closest probability measure $\hat{\nu}$ under the bounded Lipschitz distance by solving a linear programming problem.
% The linear programming step runs in  $\text{poly}(m)$ time.

% The linear programming step is an efficient way to find the closest probability measure $\nu$ to $\nu_{\YY}$ with the same support.

In the proof of accuracy for PSMM, one ingredient is to estimate the Laplacian complexity of the Lipschitz function class on $\Omega$ and connect it to the 1-Wasserstein distance. This type of argument is similar in spirit to the optimal matching problem for two sets of random points in a metric space \cite{talagrand1992matching,talagrand2005generic,bobkov2021simple}. 
When $\Omega=[0,1]^d$, PSMM achieves the optimal accuracy rate $O((\e n)^{-1/d})$ for $d\geq 3$. For $d=2$,  PSMM achieves a near-optimal accuracy  $O(\log(\e n)(\e n)^{-1/2})$. For $d=1$, the accuracy  becomes $O((\e n)^{-1/2})$. 

For the case when $d=2$, we believe that the bound in Corollary \ref{cor: accuracy_direct_hypercube} could be improved to $C\sqrt{\log (\e n)}/\sqrt{\e n}$ by replacing Dudley's chaining  bound  in Proposition \ref{prop:Ln(f)} with the generic chaining bound in  \cite{talagrand2005generic,dirksen2015tail} involving the $\gamma_1$ and $\gamma_2$  functionals on $\Omega$. We will not pursue this direction in this paper.

\paragraph{Comparison to previous results}
\cite{ullman2011pcps} proved that it is NP-hard
to generate private synthetic data on the Boolean cube which approximately preserves all two-dimensional marginals, assuming the existence of one-way functions. There exists a substantial body of work for  differentially private synthetic data   with guarantees limited to accuracy bounds for a finite set of  specified queries \cite{barak2007privacy,thaler2012faster,dwork2015efficient,vadhan2017complexity,liu2021iterative,vietriprivate,boedihardjo2022private,boedihardjo2022private2,boedihardjo2023covariance}. 

\cite{wang2016differentially} considered differentially private synthetic data in $[0,1]^d$ with guarantees for any smooth queries with bounded partial derivatives of order $K$, and achieved an accuracy of $O(\e^{-1}n^{-\frac{K}{2d+K}})$. Recently, \cite{boedihardjo2022private} introduced a method based on superregular random walks to generate differentially private synthetic data with near-optimal guarantees in 
general compact metric spaces. In particular, when the dataset is in $[0,1]^d$, they obtain $\E W_1(\mu_\cX,\mu_{\YY})\leq C\log^{\frac{3}{2d}}(\e n) (\e n)^{-\frac{1}{d}}$. A corresponding lower bound of order $n^{-1/d}$ was also proved in \cite[Corollary 9.3]{boedihardjo2022private}.  PMM matches the lower bound up to a constant factor for $d\geq 2$, and up to a logarithmic factor for $d=1$.

In terms of computational efficiency, PMM runs in time $O(\e dn)$. This is more efficient compared to the  algorithm in \cite{boedihardjo2022private}.
 
\paragraph{Organization of the paper}
The rest of the paper is organized as follows. In Section~\ref{sec: preliminaries}, we introduce some background on differential privacy and distances between measures. We will first introduce and analyze the easier and more direct method PSMM before our main result. In Section~\ref{sec: psmm}, we describe PSMM in detail and prove its privacy and accuracy for  data in a bounded metric space, and detailed results are provided for the case for the hypercube. In Section~\ref{sec: pmm}, we introduce PMM  and analyze its privacy and accuracy. Optimizing the choices of noise parameters, we obtain the optimal accuracy on the hypercube with $O(\e dn)$ running time, which proves  Theorem~\ref{thm: PMM}. 

Additional proofs are included in Appendix~\ref{sec: appendix proof}. We use a variant of Laplacian distribution, called \textit{discrete Laplacian distribution}, in PMM and PSMM. The definition and properties of discrete Laplacian distribution are included in Appendix \ref{sec: appendix discrete}.

\section{Preliminaries}
\label{sec: preliminaries}
%====================
\paragraph{Differential Privacy}
%--------------------
We use the following definition from  \cite{dwork2014algorithmic}. A randomized algorithm $\MM$ provides \textit{$\e$-differential privacy} if for any input data $D,D'$ that differs on only one element (or $D$ and $D'$ are adjacent data sets) and for any measurable set $S\subseteq\mathrm{range}(\MM)$, there is 
    \[\frac{\Pr{\MM(D)\in S}}{\Pr{\MM(D')\in S}}\leq e^\e.\]
    Here the probability is taken from the probability space of the randomness of $\MM$.

\paragraph{Wasserstein distance}
Consider a metric space $(\Omega,\rho)$ with two probability measures $\mu,\nu$. Then the 1-Wasserstein distance (see e.g., \cite{villani2009optimal} for more details) between them is defined as 
\[W_1(\mu,\nu):=\inf_{\gamma\in\Gamma(\mu,\nu)} \int_{\Omega\times\Omega}\rho(x,y)\dd\gamma(x,y),\]
where $\gamma(\mu,\nu)$ is the set of all couplings  of $\mu$ and $\nu$.

\paragraph{Bounded Lipschitz distance}
Let $(\Omega,\rho)$ be a bounded metric space. The  \textit{Lipschitz norm} of a function $f$ is defined as
\[\norm{f}_{\Lip} := \max \left\{\Lip(f),\;\frac{\norm{f}_\infty}{\diam(\Omega)}\right\},\]  where $\Lip(f)$ is the Lipschitz constant of $f$. Let $\cF$ be the set of all Lipschitz functions $f$ on $\Omega$ with $\norm{f}_{\Lip}\leq 1$. 
For signed measures $\mu,\nu$, we define the \emph{bounded Lipschitz distance}:
\[\dbl(\mu,\nu) := \sup_{f\in \cF}\left(\int f\dd\mu - \int f\dd\nu\right).\]
One can easily check that this is a metric. Moreover, in the special case where $\mu$ and $\nu$ are both probability measures, moving $f$ by a constant does not change the result of $\int f\dd\mu - \int f\dd\nu$. Therefore, for a bounded domain $\Omega$, we can always assume $f(x_0)=0$ for a fixed $x_0\in \Omega$, then  $\norm{f}_\infty\leq \diam(\Omega)$ when computing the supremum in \eqref{eq: KR duality}. This implies $\dbl$-metric is equivalent to the classical $W_1$-metric when $\mu,\nu$ are both probability measures on a bounded domain $\Omega$:
\begin{align}
    W_1(\mu,\nu) &= \sup_{{\mathrm{Lip}(f)\leq 1}}\left(\int f\dd\mu - \int f\dd\nu\right) = \sup_{f\in \cF}
    \left(\int f\dd\mu - \int f\dd\nu\right)= \dbl(\mu,\nu).
\end{align}

\section{Private signed measure mechanism (PSMM)}
\label{sec: psmm}
%====================

We will first introduce PSMM, which is an easier and more intuitive approach. The procedure of PSMM is formally described in Algorithm~\ref{alg: direct}.    Note that in the \textit{output} step of Algorithm~\ref{alg: direct}, the size of the synthetic data $m'$ depends on the rational approximation of the density function of $\hat{\nu}$, and we discuss the details here. Let $\hat v_1,\dots,\hat v_m$ be the weight of the probability measure $\hat{\nu}$ on  $y_1,\dots,y_m$, respectively. We can choose rational numbers $r_1,\dots, r_m$ such that $\max_{i\in [m]}| r_i-\hat{\nu}_i|$ is arbitrarily small. Let $m'$ be the least common multiple of the denominators of $r_1,\dots, r_m$, then we output the synthetic dataset $\hat{\YY}$ containing $m'r_i$ copies of $y_i$ for $i=1,\dots,m$.

Before analyzing the privacy and accuracy of PSMM, we introduce a useful complexity measure of a given function class, which quantifies the influence of the Laplacian noise on the function class.

\begin{algorithm}[!ht]			
\caption{Private Signed Measure Mechanism} \label{alg: direct}
\begin{algorithmic}
\State {\bf Input:}  
 true data $\XX=(x_1,\ldots,x_n) \in \Omega^n$,  partition $(\Omega_1,\dots,\Omega_m)$ of $\Omega$, privacy parameter $\e >0$.

\begin{description}
  \item[Compute the true counts] Compute the true count in each regime  $n_i = \#\{x_j\in \Omega_i:j\in [n]\}$.
  \item[Create a new dataset] Choose any element $y_i\in\Omega_i$  {  independently of $\mathcal X$}, and let  $\cY$ be the collection of $n_i$ copies of $y_i$ for each $i\in [n]$.
  \item[Add noise]  Perturb the empirical measure $\mu_\cY$ of $\cY$  and obtain a signed measure $\nu$ such that 
  \[\nu (\{y_i\}) := (n_i + \lambda_i)/n,\]
  where  $\lambda_i\sim \Lap_{\mathbb Z}(1/\e)$ are i.i.d. discrete Laplacian random variables. 
  \item[Linear programming] Find the closest probability measure $\hat{\nu}$ of $\nu$ in $\dbl$-metric using Algorithm~\ref{alg:linear_programming}, and generate synthetic data $\widehat{\cY}$ from $\hat{\nu}$.
\end{description}

\State {\bf Output:} synthetic data 
  $\widehat{\cY} = (y_1,\ldots,y_{m'}) \in \Omega^{m'}$ for some integer $m'$.
\end{algorithmic}
\end{algorithm}

\begin{algorithm}[!ht]			
\caption{Linear Programming} \label{alg:linear_programming}
\begin{algorithmic}
\State {\bf Input:}  
 A discrete signed measure $\nu$ supported on $\cY=\{y_1,\dots,y_m\}$.
\begin{description}
    \item[Compute the distances] Compute the pairwise distances $\{\|y_i-y_j\|_{\infty}, i>j\}$.
    \item[Solve the linear programming] Solve the linear programming problem with $2m^2$ variables and $m+1$ constraints:
    \begin{align*}
        \min \quad & \sum_{i,j=1}^m  \|y_i - y_j\|_\infty(u_{ij}+u_{ij}') + 2\, v_i\\
        \mbox{s.t.}\quad & \sum_{j=1}^m (u_{ij} - u_{ij}') + v_i +\tau_i \geq \nu(\{y_i\}), & \forall {i\leq m},\\
        & \sum_{i=1}^m \tau_i = 1, &\\ 
        & u_{ij},u_{ij}',v_i, \tau_i \geq 0,  & \forall i,j\leq m, i\neq j.
    \end{align*}
\end{description}

\State {\bf Output:} a probability measure $\hat{\nu}$ with $\hat\nu(\{y_i\})=\tau_i$.

\end{algorithmic}
\end{algorithm}

\subsection{Laplacian complexity}
Given the Kantorovich-Rubinstein duality \eqref{eq: KR duality},  to control the $W_1$-distance between the original measure and the private measure, we need to describe how Lipschitz functions behave under Laplacian noise. As an analog of the worst-case Rademacher complexity     \cite{bartlett2002rademacher,foster2019vector}, we consider the worst-case Laplacian complexity. 
 {Such a worst-case complexity measure appears  since the original dataset is deterministic  without any distribution assumption.}

\begin{definition}[Worst-case Laplacian complexity]
    Let $\mathcal F$ be a function class on a metric space $\Omega$.  The worst-case Laplacian complexity of $\mathcal F$  is defined by
    \begin{align}
    L_n(\mathcal F):= \sup_{X_1,\dots, X_n\in \Omega}\mathbb E \left[\sup_{f\in \mathcal F}\left|\frac{1}{n}\sum_{i=1}^n\lambda_i f(X_i)\right|\right],
    \end{align}
    where $\lambda_1,\dots, \lambda_n \sim \Lap(1)$ are i.i.d. random variables.
\end{definition}

 {Since Laplacian random variables are sub-exponential but not sub-gaussian, its complexity measure is not equivalent to the Gaussian or Rademacher complexity, but it is related to the suprema of the mixed tail process \cite{dirksen2015tail} and the quadratic empirical process \cite{mendelson2010empirical}.}
Our next proposition bounds  $L_n(\mathcal F)$ in terms of the covering numbers of $\mathcal F$.  Its proof is  a  classical application of Dudley's chaining method (see, e.g., \cite{vershynin2018high}).

\begin{proposition}[Bounding Laplacian complexity with Dudley's entropy integral]
    \label{prop:Ln(f)}
    Suppose that $(\Omega,\rho)$ is a metric space and $\cF$ is a set of  functions on $\Omega$. Then  
    \[L_{n}(\cF)\leq C\inf_{\alpha> 0}\left(2\alpha + \frac{1}{\sqrt{n}}\int_{\alpha}^{\infty} \sqrt{{\log \cN(\cF,u,\|\cdot\|_{\infty})}}\dd u+\frac{1}{{n}}\int_{\alpha}^{\infty} {\log \cN(\cF,u,\|\cdot\|_{\infty})}\dd u\right)\]
    where $\cN(\cF,u,\|\cdot\|_{\infty})$ is the covering number of $\cF$ and $C>0$ is an absolute constant.
\end{proposition}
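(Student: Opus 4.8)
The plan is to run a standard Dudley chaining argument, but carried through to second order so that the sub-exponential tails of the Laplacian noise are handled correctly. The key point is that for a fixed difference $g = f - f'$ with $\|g\|_\infty \le u$, the increment $\frac{1}{n}\sum_i \lambda_i g(X_i)$ is a weighted sum of i.i.d. $\Lap(1)$ variables, hence has a Bernstein-type tail: there are absolute constants such that
\[
\Pr{\left|\tfrac{1}{n}\sum_{i=1}^n \lambda_i g(X_i)\right| > t} \le 2\exp\!\left(-c\min\!\left(\tfrac{n t^2}{u^2},\; \tfrac{n t}{u}\right)\right),
\]
since $\sum_i g(X_i)^2 \le n u^2$ and $\|g\|_\infty \le u$. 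Equivalently, the Orlicz-type mixed-tail bound gives a sub-gaussian part of scale $u/\sqrt n$ and a sub-exponential part of scale $u/n$. This is the only probabilistic input; everything else is deterministic chaining.

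First I would fix $\alpha > 0$ and choose, for each $k \ge 0$, a minimal $\alpha 2^{-k}$-net $\cF_k$ of $\cF$ in $\|\cdot\|_\infty$, so that $|\cF_k| = \cN(\cF, \alpha 2^{-k}, \|\cdot\|_\infty)$; without loss of generality $\cF_0$ is a single point (taking $\alpha$ large enough, or else noting the net is a singleton at scale $\ge \diam \cF$). Write $\pi_k(f)$ for the nearest element of $\cF_k$ to $f$. For $f \in \cF$ one has the telescoping decomposition $f = \pi_0(f) + \sum_{k \ge 1}(\pi_k(f) - \pi_{k-1}(f))$, with $\|\pi_k(f) - \pi_{k-1}(f)\|_\infty \le 3\alpha 2^{-k}$ (and the tail $f - \pi_k(f) \to 0$ uniformly, so the linear functional $g \mapsto \frac1n\sum \lambda_i g(X_i)$ applied to the tail is controlled by $\alpha 2^{-k}\cdot \frac1n\sum|\lambda_i|$, whose expectation is $O(\alpha)$). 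Then
\[
\sup_{f \in \cF}\left|\tfrac1n\sum_i \lambda_i f(X_i)\right| \le \left|\tfrac1n\sum_i\lambda_i\pi_0(f)(X_i)\right| + \sum_{k\ge 1}\ \max_{g \in D_k}\left|\tfrac1n\sum_i\lambda_i g(X_i)\right| + O\!\left(\alpha\cdot\tfrac1n\textstyle\sum_i|\lambda_i|\right),
\]
where $D_k$ is the set of realizable differences $\pi_k(f) - \pi_{k-1}(f)$, so $|D_k| \le |\cF_k||\cF_{k-1}| \le \cN(\cF,\alpha 2^{-k},\|\cdot\|_\infty)^2$ and each $g \in D_k$ has $\|g\|_\infty \le 3\alpha 2^{-k}$.

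Next I would bound $\E \max_{g \in D_k}|\cdots|$ by the standard maximal-inequality consequence of the Bernstein tail above: for a finite family of size $N$ of such increments, the expected maximum is $O\big(\frac{u}{\sqrt n}\sqrt{\log N} + \frac{u}{n}\log N\big)$ with $u = 3\alpha 2^{-k}$ and $\log N \le 2\log \cN(\cF, \alpha 2^{-k},\|\cdot\|_\infty)$. Summing over $k \ge 1$ and comparing the resulting dyadic sums $\sum_k \alpha 2^{-k}\sqrt{\log \cN(\cF,\alpha 2^{-k},\cdot)}$ and $\sum_k \alpha 2^{-k}\log\cN(\cF,\alpha 2^{-k},\cdot)$ to the corresponding integrals $\int_\alpha^\infty \sqrt{\log\cN}\,\dd u$ and $\int_\alpha^\infty \log\cN\,\dd u$ (each dyadic term is bounded by a constant times the integral of the monotone integrand over $[\alpha 2^{-k-1}, \alpha 2^{-k}]$, up to the usual reindexing), together with $\E\big[\frac1n\sum_i|\lambda_i|\big] = \E|\lambda_1| = O(1)$ for the $O(\alpha)$ term, yields exactly the claimed three-term bound. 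Finally, take the infimum over $\alpha > 0$ and the supremum over $X_1,\dots,X_n \in \Omega$ — both sides are monotone in the obvious way, and the right-hand side does not depend on the points — to conclude.

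The main obstacle, and the only place where this differs from the textbook Rademacher/Gaussian chaining, is the bookkeeping of the \emph{two} tail regimes: each chaining link contributes both a sub-gaussian term (giving the $n^{-1/2}\int\sqrt{\log\cN}$ piece) and a sub-exponential term (giving the $n^{-1}\int\log\cN$ piece), and one must verify that the sub-exponential contributions indeed sum to the linear-in-$\log\cN$ integral rather than something worse, and that the uniform tail/continuity argument closing the chain contributes only the harmless $O(\alpha)$ term. This is routine once the Bernstein inequality for weighted Laplacian sums is in hand, which in turn is immediate from the moment generating function of $\Lap(1)$ recorded in Appendix~\ref{sec: appendix discrete}.
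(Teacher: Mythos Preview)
Your ingredients are right --- Bernstein's inequality for sub-exponential tails, chaining with a union bound at each level, and a residual term --- and they match the paper's approach. But your chaining is set up in the wrong direction, and this causes a concrete mismatch with the statement you are trying to prove.

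You take nets at scales $\alpha 2^{-k}$ for $k\ge 0$, so your chain runs from the coarsest scale $\alpha$ \emph{down to $0$}. The dyadic sum $\sum_{k\ge 1}\alpha 2^{-k}\sqrt{\log\cN(\cF,\alpha 2^{-k},\|\cdot\|_\infty)}$ is therefore supported on scales in $(0,\alpha/2]$; by the monotonicity of $u\mapsto\log\cN(\cF,u,\|\cdot\|_\infty)$ it is comparable to $\int_0^{\alpha}\sqrt{\log\cN}\,\dd u$, \emph{not} to $\int_\alpha^\infty\sqrt{\log\cN}\,\dd u$ as the proposition requires. The same issue applies to the sub-exponential piece. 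Relatedly, the claim that the $\alpha$-net $\cF_0$ is ``without loss of generality a single point'' is false unless $\alpha\ge\diam(\cF)$, which you cannot assume since you later take the infimum over $\alpha>0$; and if $\cF_0$ is not a singleton, the starting term $Z_{\pi_0(f)}$ has no reason to be $O(\alpha)$.

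The fix (and what the paper does) is to chain in the opposite direction: take $\varepsilon_j=2^{-j}$, start at a level $j_0$ where $\varepsilon_{j_0}\ge\diam(\cF)$ so that $\pi_{j_0}(f)\equiv 0$, and stop at a level $m$ with $\varepsilon_m\asymp\alpha$. The residual $f-\pi_m(f)$ then has $\|\cdot\|_\infty\le\varepsilon_m$, and $\E\big|\frac1n\sum_i\lambda_i(f-\pi_m(f))(X_i)\big|\le\varepsilon_m\cdot\E|\lambda_1|=O(\alpha)$, which is where the $2\alpha$ term actually comes from. The chaining sum $\sum_{j=j_0+1}^m\varepsilon_j\big(\sqrt{\log N_j/n}+\log N_j/n\big)$ is then over scales in $[\alpha,\diam(\cF)]$ and is bounded by the stated integrals over $[\alpha,\infty)$.
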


In particular, we are interested in the case where $\mathcal F$ is the class of all the bounded Lipschitz functions. One can find the result in \cite{tikhomirov1993varepsilon} or more explicit bound in \cite{gottlieb2016adaptive} that for the set $\cF$ of functions $f$ with $\norm{f}_\Lip\leq 1$, its covering number satisfies
\[\cN(\cF,u,\|\cdot\|_{\infty})\leq \left(\frac{8}{u}\right)^{\cN(\Omega, {u}/{2},\rho)}.\]

When $\Omega=[0,1]^d$, a better bound on the covering number for Lipschitz functions is available from \cite{tikhomirov1993varepsilon, von2004distance}:
    \[\cN(\cF,u,\|\cdot\|_{\infty})\leq \left(2\left\lceil{2}/u \right\rceil+1\right) 2^{\cN\left([0,1]^d, u/2, \|\cdot\|_\infty\right)},\]
    which implies the following corollary.

\begin{corollary}[Laplacian complexity for Lipschitz functions on the hypercube]
\label{cor: hypercube}
    Let $\Omega = [0,1]^d$ with the $\|\cdot\|_\infty$ metric, and $\cF$ be the set of all Lipschitz functions $f$ on $\Omega$ with $\norm{f}_\Lip\leq 1$. We have
        \[L_{n}(\mathcal F)  \leq \left\{\begin{aligned}
        & C n^{-1/2}\quad &\textrm{ if } d=1,\\
         &{C\log n} \cdot n^{-1/2}\quad &\textrm{ if } d=2,\\
        &Cd^{-1}  {n^{-{1}/{d}}}\quad &\textrm{ if } d\geq 3.
        \end{aligned}\right.\]
\end{corollary}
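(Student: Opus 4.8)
The plan is to apply Proposition~\ref{prop:Ln(f)} directly with the explicit covering-number bound for Lipschitz functions on $[0,1]^d$ stated just above the corollary, namely $\log \cN(\cF,u,\|\cdot\|_\infty)\lesssim \cN([0,1]^d,u/2,\|\cdot\|_\infty)\cdot\log(1/u) + \log(1/u)$ for $u$ small. Since $\cN([0,1]^d,u/2,\|\cdot\|_\infty)\asymp u^{-d}$ (and the covering is trivial, equal to $1$, once $u\gtrsim 1$, so the integrals truncate at $u\asymp 1$), the entropy is of order $u^{-d}\log(1/u)$. So the task reduces to estimating, for a suitable choice of the free parameter $\alpha>0$, the two integrals $\frac{1}{\sqrt n}\int_\alpha^{c}\sqrt{u^{-d}\log(1/u)}\,\dd u$ and $\frac{1}{n}\int_\alpha^{c}u^{-d}\log(1/u)\,\dd u$, and then optimizing over $\alpha$.

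First I would split into the three regimes by the behavior of $\int_\alpha u^{-d/2}\,\dd u$ (ignoring logs for the heuristic): for $d=1$ this integral converges as $\alpha\to0$, so one can take $\alpha\to 0$ (or $\alpha\asymp 1/n$) and the $n^{-1/2}$ prefactor gives the $Cn^{-1/2}$ bound, with the second ($1/n$) integral contributing a lower-order term. For $d=2$ the first integral is $\int_\alpha^c u^{-1}\sqrt{\log(1/u)}\,\dd u\asymp \log^{3/2}(1/\alpha)$, while the $1/n$ integral is $\int_\alpha^c u^{-2}\log(1/u)\,\dd u\asymp \alpha^{-1}\log(1/\alpha)$; balancing $\alpha$ against $\frac1{\sqrt n}\log^{3/2}(1/\alpha)$ and $\frac1n\alpha^{-1}\log(1/\alpha)$ one finds the optimal $\alpha\asymp n^{-1/2}$ up to logs, giving $L_n(\cF)\lesssim \log n\cdot n^{-1/2}$ after checking the first integral term $\frac1{\sqrt n}\log^{3/2}\sqrt n \asymp \log^{3/2}n\cdot n^{-1/2}$ — here I would need to be slightly careful since this naive bound gives $\log^{3/2}n$ rather than $\log n$; the improvement to $\log n$ comes from choosing $\alpha$ slightly larger, e.g. $\alpha\asymp \log n/\sqrt n$ (or from a more careful treatment of the log factor in the covering bound, since the $\log(1/u)$ there really only multiplies the outer count, not the exponent), and I would present whichever bookkeeping yields $\log n\cdot n^{-1/2}$ cleanly. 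For $d\geq 3$ the first integral diverges like $\alpha^{-(d-2)/2}$ (times $\sqrt{\log}$) and the $1/n$ integral like $\alpha^{-(d-1)}\log(1/\alpha)$; balancing $\alpha \asymp \frac1{\sqrt n}\alpha^{-(d-2)/2}$ gives $\alpha\asymp n^{-1/d}$, at which point one checks that the $1/n$-integral term $\frac1n\alpha^{-(d-1)}\asymp n^{-1+(d-1)/d}=n^{-1/d}$ is of the same order, so all three terms in Proposition~\ref{prop:Ln(f)} are $\asymp n^{-1/d}$ up to constants; tracking the $d$-dependence of the constants (the covering number is $(1+2/(u\sqrt d))^d$-ish in $\ell^\infty$, and the exponents produce a $1/d$ after optimization) yields the claimed $Cd^{-1}n^{-1/d}$.

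The main obstacle I anticipate is purely the bookkeeping of logarithmic factors, particularly getting $d=2$ down to exactly $\log n\cdot n^{-1/2}$ rather than $\log^{3/2}n\cdot n^{-1/2}$: this requires using the covering bound in its sharper form (where the $\log(1/u)$ factor sits outside the exponent $\cN([0,1]^d,u/2)$ rather than being absorbed into a crude $\log\cN \lesssim u^{-d}\log(1/u)$ bound), so that in the chaining integral the entropy $\sqrt{\log\cN}$ splits as $\sqrt{u^{-d}\log(8/u)}\le u^{-d/2}\sqrt{\log(8/u)}$ and one handles the $\sqrt{\log}$ growth tightly, and/or choosing the truncation level $\alpha$ optimally with the log factor in mind. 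A secondary technical point is justifying that the integrals can be truncated at $u\asymp 1=\diam(\Omega)$ (beyond which $\cN(\cF,u,\|\cdot\|_\infty)=1$ and the integrand vanishes), which follows since any $1$-Lipschitz function with bounded Lipschitz norm on $[0,1]^d$ has oscillation at most $\mathrm{diam}(\Omega)=1$ in $\ell^\infty$. I would organize the proof as: (i) record the truncated entropy estimate; (ii) plug into Proposition~\ref{prop:Ln(f)}; (iii) do the three cases, in each case exhibiting the near-optimal $\alpha$ and verifying that no term exceeds the stated order.
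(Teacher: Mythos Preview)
Your overall plan---plug the metric-entropy bound into Proposition~\ref{prop:Ln(f)} and optimize over $\alpha$---is exactly what the paper does. The gap is that you have misread the covering bound stated just above the corollary. That bound is
\[
\cN(\cF,u,\|\cdot\|_\infty)\le \bigl(2\lceil 2/u\rceil+1\bigr)\cdot 2^{\cN([0,1]^d,\,u/2,\,\|\cdot\|_\infty)},
\]
so taking logarithms gives $\log\cN(\cF,u)\lesssim \log(1/u)+\cN([0,1]^d,u/2)\cdot\log 2\lesssim u^{-d}$, \emph{with no $\log(1/u)$ multiplying $u^{-d}$}. You wrote $\log\cN\lesssim \cN([0,1]^d,u/2)\cdot\log(1/u)$, which is the Gottlieb--Kontorovich bound (base $8/u$), not the sharper Kolmogorov--Tikhomirov/von Luxburg--Bousquet bound (base $2$) that the corollary explicitly invokes.

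This matters: with the extra $\log(1/u)$ in the entropy you cannot reach the stated rates. For $d=2$ your first integral is $\frac{1}{\sqrt n}\int_\alpha^1 u^{-1}\sqrt{\log(1/u)}\,\dd u\asymp \log^{3/2}(1/\alpha)/\sqrt n$, and enlarging $\alpha$ to $\log n/\sqrt n$ does not help---that term stays $\asymp \log^{3/2}n/\sqrt n$. For $d\ge 3$ the same mechanism inserts an unremovable $\log n$ factor into both integrals at $\alpha\asymp n^{-1/d}$, so your claim ``up to constants'' is not correct with that entropy bound. Once you use $\log\cN(\cF,u)\lesssim u^{-d}$, the paper's computation is immediate: truncate the integrals at $u\asymp 1$ (since $\diam(\cF)\le 2$), take $\alpha=2n^{-1/d}$ for $d\ge 2$, and for $d=1$ note that $\int_\alpha^1 u^{-1/2}\,\dd u$ is bounded so $\alpha\asymp n^{-1/2}$ suffices.
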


\paragraph{Discrete Laplacian complexity}
Laplacian complexity can be  useful  for  differential privacy algorithms based on the Laplacian mechanism \cite{dwork2014algorithmic}. However, since PSMM perturbs  counts in each subregion, it is more convenient for us to add integer noise to the true counts. Instead, we will use the \textit{worst-case discrete Laplacian complexity} defined below:
 \begin{align}
   \widetilde{L}_n(\mathcal F):= \sup_{X_1,\dots, X_n\in \Omega}\mathbb E \left[\sup_{f\in \mathcal F}\left|\frac{1}{n}\sum_{i=1}^n\lambda_i f(X_i)\right|\right],
    \end{align}
    where $\lambda_1,\dots, \lambda_n \sim \Lap_{\mathbb Z}(1)$ are i.i.d. discrete Laplacian random variables.
   
    In particular, $\Lap_{\mathbb Z}(1)$ has a bounded sub-exponential norm,    therefore the proof of Proposition~\ref{prop:Ln(f)} works for discrete Laplacian random variables as well. Consequently,   Corollary~\ref{cor: hypercube} also holds for $\widetilde{L}_n(\mathcal F)$, with a different absolute constant $C$. 

\subsection{Privacy and Accuracy of Algorithm \ref{alg: direct}}
%--------------------

The privacy guarantee of Algorithm~\ref{alg: direct} can be proved by checking the definition. The essence of the proof is the same as the classical Laplacian mechanism \cite{dwork2014algorithmic}.
\begin{proposition}[Privacy of Algorithm~\ref{alg: direct}]\label{prop: direct_priv}
	 Algorithm~\ref{alg: direct} is $\e$-differentially private.
\end{proposition}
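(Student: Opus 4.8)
The plan is to verify the $\e$-differential privacy directly from the definition, using the fact that among the four steps of Algorithm~\ref{alg: direct}, only the ``Add noise'' step depends on the true data $\XX$ in a randomized way, and that this step is essentially the classical discrete Laplacian mechanism applied to the vector of counts.

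First I would fix two adjacent datasets $\XX$ and $\XX'$ differing in exactly one element, say $x_k \neq x_k'$ while all other coordinates agree. Since the partition $(\Omega_1,\dots,\Omega_m)$ is fixed in advance, and the representative points $y_i \in \Omega_i$ are chosen independently of $\XX$, the only object affected by the change is the count vector $(n_1,\dots,n_m)$. Moving one data point from its cell to another cell changes the count vector by decreasing one coordinate by $1$ and increasing another by $1$ (or, if $x_k$ and $x_k'$ land in the same cell, not at all); in all cases the $\ell^1$ distance between the count vectors for $\XX$ and $\XX'$ is at most $2$. Actually, to get the clean constant it is better to observe that changing a single data point changes at most two coordinates, each by at most $1$, so $\sum_{i=1}^m |n_i - n_i'| \le 2$.

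Next I would analyze the signed measure $\nu$ produced in the ``Add noise'' step. Writing $\nu$ equivalently as the integer vector $N = (n_1 + \lambda_1, \dots, n_m + \lambda_m)$ rescaled by $1/n$, with $\lambda_i \sim \Lap_{\Z}(1/\e)$ i.i.d., the density ratio between the laws of $N$ under $\XX$ and under $\XX'$ at any point factorizes over the coordinates. Using the probability mass function of the discrete Laplacian $\Lap_{\Z}(1/\e)$ (recorded in Appendix~\ref{sec: appendix discrete}), which satisfies $p(z)/p(z') \le e^{\e |z - z'|}$, the product telescopes to a bound of $e^{\e \sum_i |n_i - n_i'|} \le e^{2\e}$. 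Hmm --- this gives $2\e$ rather than $\e$, so the sensitivity normalization in the algorithm must be set so that the $\ell^1$-sensitivity is $1$; I would reconcile this by noting that in the adjacency model used here (replacing one element, which in the counting picture is a move of one unit of mass) the effective sensitivity is $1$ when one accounts for the fact that the total count $n$ is held fixed, or else by simply using the parameter $1/\e$ against sensitivity $2$ and absorbing constants --- I would follow whichever convention the paper's Appendix~\ref{sec: appendix discrete} fixes, and state the discrete Laplacian tail/ratio bound as the one quantitative input. Then, since differential privacy is preserved under post-processing (the ``Linear programming'' step and the rational-approximation output step are deterministic, or at worst randomized independently of $\XX$, functions of $\nu$), the map $\XX \mapsto \widehat{\cY}$ inherits the same privacy guarantee, giving $\Pr{\MM(\XX) \in S}/\Pr{\MM(\XX') \in S} \le e^{\e}$ for every measurable $S$.

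The main obstacle --- really the only subtlety --- is bookkeeping the sensitivity constant correctly: pinning down exactly how much the count vector changes under the paper's notion of adjacency, and matching that to the scale parameter of $\Lap_{\Z}(1/\e)$ so that the exponent comes out to $\e$ and not $2\e$. Everything else (the product form of the density ratio, the per-coordinate ratio bound for the discrete Laplacian, and post-processing invariance) is routine and mirrors the standard Laplacian mechanism argument of \cite{dwork2014algorithmic}. I would therefore present the coordinatewise ratio computation carefully and cite the discrete Laplacian ratio bound from the appendix, then invoke post-processing to conclude.
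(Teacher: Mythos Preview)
Your approach is exactly the paper's: show that the signed measure $\nu$ (equivalently, the noisy count vector) is $\e$-differentially private via the coordinatewise density-ratio computation for discrete Laplacians, and then invoke post-processing for the linear-programming and output steps.

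On the sensitivity constant you flagged as the ``only subtlety'': the paper's proof asserts that for adjacent $\cX_1,\cX_2$ the normalized count vectors $\overline{\mu_{\cY_1}},\overline{\mu_{\cY_2}}$ differ in \emph{one} coordinate, by at most $1/n$, so that $n\|\overline{\mu_{\cY_1}}-\overline{\mu_{\cY_2}}\|_1\le 1$ and the ratio bound comes out to exactly $e^\e$. This is the add/remove adjacency convention (one point inserted or deleted), not the replacement convention you adopted; the same reading is used in the proof of Theorem~\ref{thm: privacy}, where ``differ in one point $x_i$'' is taken to mean a single added or removed point, so that only one regional count changes at each level. Under the replacement model your $2\e$ computation is the correct one, so the discrepancy you noticed is purely a matter of which neighboring relation is in force, not a gap in the argument.
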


 We now turn to accuracy.  The linear programming problem stated in Algorithm \ref{alg:linear_programming} has  $(2m^2+2m)$ many variables and $(m+1)$ many constraints, which can be solved in polynomial time in $m$. We first show that Algorithm \ref{alg:linear_programming} indeed outputs the closest probability measure to $\nu$ in the $\dbl$-distance in the next proposition.

\begin{proposition}\label{prop:LP}
    For a discrete signed measure $\nu$ on $\Omega$, Algorithm~\ref{alg:linear_programming} gives its closest probability measure in $\dbl$-distance with the same support set with a polynomial running time in $m$.
\end{proposition}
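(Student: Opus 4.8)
The plan is to verify two things about Algorithm~\ref{alg:linear_programming}: first, that its feasible solutions correspond exactly to probability measures supported on $\cY$ together with a certificate of their $\dbl$-distance to $\nu$; and second, that the linear objective computes exactly that $\dbl$-distance, so the minimizer is the closest probability measure. The key structural fact I would use is a transportation/flow interpretation of $\dbl$ restricted to measures supported on the finite set $\cY=\{y_1,\dots,y_m\}$. Since every function in $\cF$ is determined on $\cY$ by its $m$ values $f(y_1),\dots,f(y_m)$, and the constraints $\|f\|_{\Lip}\le 1$ on this finite set reduce to the finitely many inequalities $|f(y_i)-f(y_j)|\le \|y_i-y_j\|_\infty$ and $|f(y_i)|\le\diam(\Omega)$, the quantity $\dbl(\nu,\hat\nu)=\sup_{f\in\cF}\sum_i f(y_i)(\nu(\{y_i\})-\hat\nu(\{y_i\}))$ is itself the value of a linear program in the variables $f(y_i)$. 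I would write down its LP dual: the dual variables $u_{ij},u_{ij}'$ play the role of a transportation plan moving mass between $y_i$ and $y_j$ at cost $\|y_i-y_j\|_\infty$, and $v_i$ accounts for creating or destroying mass at $y_i$ at cost $\diam(\Omega)$ (which on $[0,1]^d$ with the $\ell^\infty$ metric is $1$, matching the coefficient $2v_i$ once one accounts for symmetrization). The dual of "$\sup_f \langle f,\nu-\hat\nu\rangle$ over $f\in\cF$" is exactly a min-cost-flow-type problem whose optimal value equals $\dbl(\nu,\hat\nu)$; this is essentially the Kantorovich--Rubinstein duality \eqref{eq: KR duality} in its finite, signed form.

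Concretely, I would proceed as follows. (1) Fix a candidate probability vector $\tau=(\tau_1,\dots,\tau_m)$ with $\tau_i\ge0$, $\sum_i\tau_i=1$, defining $\hat\nu(\{y_i\})=\tau_i$, and show that the minimum of $\sum_{i,j}\|y_i-y_j\|_\infty(u_{ij}+u_{ij}')+2v_i$ over $u,u',v\ge0$ subject to $\sum_j(u_{ij}-u_{ij}')+v_i+\tau_i\ge\nu(\{y_i\})$ equals $\dbl(\nu,\hat\nu)$. The "$\le$" direction: given an optimal $1$-Lipschitz $f$ with $f(y_i)=:a_i$, one reads off a feasible $(u,u',v)$ by routing the signed discrepancy $\nu(\{y_i\})-\tau_i$ along edges/terminals, and bounds the objective using $|a_i-a_j|\le\|y_i-y_j\|_\infty$ and $|a_i|\le1$. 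The "$\ge$" direction is LP duality, or equivalently an explicit feasible $f$: from any feasible flow one constructs a $1$-Lipschitz witness, so the primal value cannot exceed any flow cost. (2) Having identified, for each fixed $\tau$, the inner minimum with $\dbl(\nu,\hat\nu)$, observe that the full LP in Algorithm~\ref{alg:linear_programming} jointly minimizes over $\tau$ as well, hence outputs $\arg\min_{\hat\nu}\dbl(\nu,\hat\nu)$ over probability measures on $\cY$, and a minimizer exists by compactness of the simplex and continuity of $\dbl(\nu,\cdot)$. (3) The running-time claim is immediate: the LP has $2m^2+2m$ variables and $m+1$ constraints, and the pairwise distances are computed in $O(m^2 d)$, so everything is $\mathrm{poly}(m)$.

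The main obstacle I anticipate is getting the flow/LP-duality bookkeeping exactly right, in particular the role of the coefficient $2$ on $v_i$ and the absence of a matching $v_i$ on the left-hand side of only one inequality per $i$: one must check that the asymmetry in the constraint (only $\ge\nu(\{y_i\})$, not also $\le$) does not lose generality, which follows because $\sum_i\nu(\{y_i\})$ need not equal $1$ while $\sum_i\tau_i=1$ is enforced, so slack is absorbed by the $v_i$ and by the inequality direction; and that the cost $2v_i$ correctly encodes the $\|f\|_\infty\le\diam(\Omega)=1$ constraint after the standard symmetrization $f\mapsto f$ vs.\ $f\mapsto-f$ used to pass from $\sup_f(\int f\,\dd\nu-\int f\,\dd\hat\nu)$ to an absolute value. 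Once the correspondence between feasible LP points and (Lipschitz witness, transportation plan) pairs is pinned down, weak and strong LP duality finish the proof; I would present it via explicit primal/dual feasibility rather than invoking a black-box min-cost-flow theorem, to keep the argument self-contained.
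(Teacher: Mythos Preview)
Your proposal is correct and follows essentially the same LP-duality approach as the paper: write $\dbl(\nu,\hat\nu)$ for fixed $\hat\nu$ as a finite LP in the values $f(y_i)$, dualize to obtain the flow-type LP in $(u,u',v)$, then let $\tau$ vary. The bookkeeping obstacle you anticipate about the coefficient $2v_i$ and the one-sided constraint is resolved in the paper by the change of variables $f_i' = f_i + 1$, which puts the primal in standard form ($0\le f_i'\le 2$) so that the dual drops out mechanically; you may find this cleaner than the explicit flow/witness constructions you outline.
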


Now we are ready to analyze the accuracy of Algorithm~\ref{alg: direct}. {In PSMM,  independent Laplacian noise is added to the count of each sub-region. Therefore, the Laplacian complexity arises when considering the expected Wasserstein distance between the original empirical measure and the synthetic measure.}

\begin{theorem}[Accuracy of Algorithm~\ref{alg: direct}]
    \label{thm: accuracy_direct}
    Suppose $(\Omega_1,\dots,\Omega_m)$ is a partition of $(\Omega,\rho)$ and $\mathcal F$ is the set of all functions with Lipschitz norm bounded by $1$. Then the measure $\hat\nu$ generated from Algorithm~\ref{alg: direct} satisfies
    \[\E W_1(\mu_\XX,\hat\nu)\leq \max_i \diam(\Omega_i) + \frac{2 m}{\e n} \widetilde L_m(\cF).\]
\end{theorem}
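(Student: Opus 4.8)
The plan is to bound $\E W_1(\mu_\XX, \hat\nu)$ by inserting the intermediate measures $\mu_\cY$ (the "discretized" empirical measure) and $\nu$ (the noisy signed measure), and then estimate each piece. By the triangle inequality for $\dbl$ (which by Section~\ref{sec: preliminaries} coincides with $W_1$ on probability measures, and is still a metric on signed measures), I would write
\[
\E W_1(\mu_\XX,\hat\nu) \;\le\; W_1(\mu_\XX,\mu_\cY) \;+\; \E\, \dbl(\mu_\cY,\nu) \;+\; \E\,\dbl(\nu,\hat\nu).
\]
The first term is purely geometric: since $\mu_\XX$ and $\mu_\cY$ have the same counts $n_i$ in each cell $\Omega_i$, and $y_i\in\Omega_i$, there is an obvious coupling moving each $x_j$ to the representative $y_i$ of its cell, giving $W_1(\mu_\XX,\mu_\cY)\le\max_i\diam(\Omega_i)$. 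For the third term, recall $\hat\nu$ is by construction (Proposition~\ref{prop:LP}) the $\dbl$-closest probability measure to $\nu$; since $\mu_\cY$ is itself a probability measure supported on $\cY$, we get $\dbl(\nu,\hat\nu)\le \dbl(\nu,\mu_\cY)$, so the third term is bounded by the second. Hence everything reduces to controlling $\E\,\dbl(\mu_\cY,\nu)$.

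Next I would unpack $\dbl(\mu_\cY,\nu)$ using the definition of the bounded Lipschitz distance: for $f\in\cF$,
\[
\int f\,\dd\mu_\cY - \int f\,\dd\nu \;=\; -\frac{1}{n}\sum_{i=1}^m \lambda_i\, f(y_i),
\]
since $\nu(\{y_i\}) - \mu_\cY(\{y_i\}) = \lambda_i/n$. Taking the supremum over $f\in\cF$ and then expectation,
\[
\E\,\dbl(\mu_\cY,\nu) \;=\; \E\,\sup_{f\in\cF}\Bigl|\frac{1}{n}\sum_{i=1}^m \lambda_i f(y_i)\Bigr|,
\]
where the $\lambda_i\sim\Lap_\Z(1/\e)$ are i.i.d. I would then rescale: $\lambda_i = \e^{-1}\lambda_i'$ with $\lambda_i'\sim\Lap_\Z(1)$ (using that $\Lap_\Z(1/\e)$ is just $1/\e$ times $\Lap_\Z(1)$ in the appropriate sense — or more carefully, bounding via the sub-exponential norm), pulling a factor $1/\e$ out. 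This turns the expression into $\frac{1}{\e n}\,\E\,\sup_{f\in\cF}|\sum_{i=1}^m \lambda_i' f(y_i)|$, and writing $\sum_{i=1}^m = m\cdot\frac{1}{m}\sum_{i=1}^m$ reveals exactly $m$ times the quantity inside the definition of $\widetilde L_m(\cF)$, evaluated at the specific points $y_1,\dots,y_m\in\Omega$. Bounding by the supremum over all configurations of $m$ points gives $\E\,\dbl(\mu_\cY,\nu)\le \frac{m}{\e n}\widetilde L_m(\cF)$. Combining the three terms yields the claimed bound $\max_i\diam(\Omega_i) + \frac{2m}{\e n}\widetilde L_m(\cF)$, the factor $2$ coming from the two copies of the $\dbl(\mu_\cY,\nu)$ term.

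The only genuinely delicate point is the rescaling step: one must be careful that $\Lap_\Z(1/\e)$ is not literally $\e^{-1}$ times an integer-valued random variable, so the identification with $\widetilde L_m(\cF)$ (defined via $\Lap_\Z(1)$) should be done through the sub-exponential norm, i.e. $\|\Lap_\Z(1/\e)\|_{\psi_1}\lesssim 1/\e$, and the proof of Proposition~\ref{prop:Ln(f)} (which only uses the sub-exponential norm) applied directly; alternatively one invokes the properties of the discrete Laplacian recorded in Appendix~\ref{sec: appendix discrete}. Everything else — the geometric coupling, the triangle inequality, and the LP-optimality of $\hat\nu$ — is routine. I expect this scaling/normalization bookkeeping to be the main (minor) obstacle; no chaining or probabilistic heavy lifting is needed here since all of that is already packaged into $\widetilde L_m(\cF)$.
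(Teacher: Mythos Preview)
Your proposal is correct and follows essentially the same three-step argument as the paper: bound $W_1(\mu_\XX,\mu_\cY)$ by the maximal cell diameter, bound $\dbl(\nu,\hat\nu)\le\dbl(\mu_\cY,\nu)$ by optimality of $\hat\nu$, and express $\E\,\dbl(\mu_\cY,\nu)$ in terms of the discrete Laplacian complexity. In fact, you are more careful than the paper on the rescaling step: the paper simply writes $\E\,\dbl(\mu_\cY,\nu)=\frac{m}{\e n}\widetilde L_m(\cF)$ as an equality without comment, whereas you correctly flag that $\Lap_\Z(1/\e)$ is not literally $\e^{-1}\cdot\Lap_\Z(1)$ and that the identification should go through the sub-exponential norm (together with the sup over point configurations, which also makes it an inequality rather than an equality).
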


Note that $\diam(\Omega_i)\asymp m^{-1/d}$ can be satisfied when we take a  partition of $\Omega=[0,1]^d$ where each $\Omega_i$ is a subcube of the same size. Using the formula above and the result of Laplacian complexity for the hypercube in Corollary~\ref{cor: hypercube}, one can easily deduce the following result.  

\begin{corollary}[Accuracy of Algorithm~\ref{alg: direct} on the hypercube]\label{cor: accuracy_direct_hypercube}
Take $m=\lceil\e n\rceil$ and let $(\Omega_1,\dots,\Omega_m)$ be a partition of $\Omega=[0,1]^d$ with the norm $\|\cdot\|_\infty$. Assume that   $\diam(\Omega_i)\asymp m^{-1/d}$. Then the measure $\hat{\nu}$ generated from Algorithm~\ref{alg: direct} satisfies
\[\E W_1(\mu_\cX,\hat{\nu}) \leq \left\{\begin{aligned}
    &C(\e n)^{-\frac{1}{2}}&\quad \textrm{ if } d=1,\\
    &{C}\log(\e n)(\e n)^{-\frac{1}{2}}&\quad \textrm{ if } d=2,\\
    &C {(\e n)^{-\frac{1}{d}}}&\quad \textrm{ if } d\geq 3.
\end{aligned}\right.\]
\end{corollary}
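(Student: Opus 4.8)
\textbf{Proof proposal for Corollary~\ref{cor: accuracy_direct_hypercube}.}

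The plan is to instantiate Theorem~\ref{thm: accuracy_direct} with the specific partition described in the hypothesis and then read off the bound from Corollary~\ref{cor: hypercube} (via its discrete-Laplacian analogue $\widetilde L_m(\cF)$). First I would fix the partition $(\Omega_1,\dots,\Omega_m)$ of $[0,1]^d$ into $m = \lceil \e n \rceil$ pieces with $\diam(\Omega_i) \asymp m^{-1/d}$ under the $\|\cdot\|_\infty$ metric — concretely, slice each coordinate into $\lceil m^{1/d}\rceil$ equal intervals, which gives at least $m$ cells each of $\ell^\infty$-diameter $O(m^{-1/d})$ (one can merge or discard a bounded number of cells to land exactly at $m$ without affecting the order of the diameter). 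This makes $\max_i \diam(\Omega_i) \le C m^{-1/d} = C \lceil \e n\rceil^{-1/d} \le C' (\e n)^{-1/d}$.

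Next, Theorem~\ref{thm: accuracy_direct} gives
\[
\E W_1(\mu_\XX, \hat\nu) \le \max_i \diam(\Omega_i) + \frac{2m}{\e n}\,\widetilde L_m(\cF)
\le C(\e n)^{-1/d} + \frac{2\lceil \e n\rceil}{\e n}\,\widetilde L_m(\cF),
\]
and since $\lceil \e n\rceil/(\e n) \le 2$ (for $\e n \ge 1$; the regime $\e n < 1$ is trivial as the right-hand side bounds are $\gtrsim 1 \ge \diam(\Omega)$), the second term is $O(\widetilde L_m(\cF))$. Now apply Corollary~\ref{cor: hypercube}, which (as noted in the paragraph on discrete Laplacian complexity) holds verbatim for $\widetilde L_m$ with a different absolute constant: $\widetilde L_m(\cF) \le C m^{-1/2}$ for $d=1$, $\le C (\log m)\, m^{-1/2}$ for $d=2$, and $\le C d^{-1} m^{-1/d}$ for $d \ge 3$. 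Substituting $m = \lceil \e n\rceil$ and using $m \asymp \e n$, $\log m \asymp \log(\e n)$, each of the three cases yields a bound dominated by its corresponding term: $(\e n)^{-1/2}$, $\log(\e n)(\e n)^{-1/2}$, and $(\e n)^{-1/d}$ respectively — and in each case this second term is of the same order as or larger than the first term $C(\e n)^{-1/d}$ (equal for $d\ge 3$, and larger for $d=1,2$), so the maximum of the two is absorbed into a single constant. This gives exactly the stated three-case bound.

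There is no serious obstacle here; the corollary is a direct substitution. The only points requiring a line of care are: (i) checking that a partition with the claimed diameter bound and exactly $m$ cells actually exists (handled by the coordinate-slicing construction above, tolerating $O(1)$-factor slack absorbed into constants), and (ii) verifying that the prefactor $m/(\e n)$ is bounded by an absolute constant so that it does not affect the rate — both immediate from $m = \lceil \e n\rceil$. One should also note that for $d=1,2$ the first term $(\e n)^{-1/d}$ in Theorem~\ref{thm: accuracy_direct} is smaller than the Laplacian-complexity term, so it is the complexity term that sets the rate, whereas for $d\ge 3$ both terms are $\asymp (\e n)^{-1/d}$; either way the sum is $O$ of the claimed quantity.
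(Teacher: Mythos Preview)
Your proposal is correct and follows essentially the same route as the paper: apply Theorem~\ref{thm: accuracy_direct}, bound $\max_i \diam(\Omega_i) \asymp (\e n)^{-1/d}$, invoke the discrete version of Corollary~\ref{cor: hypercube} for $\widetilde L_m(\cF)$, and observe that for $d=1,2$ the complexity term dominates while for $d\ge 3$ the two terms are comparable. Your write-up is in fact slightly more careful than the paper's (you address the existence of the partition and the boundedness of $m/(\e n)$ explicitly), but the argument is the same.
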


\section{Private measure mechanism (PMM)}
\label{sec: pmm}

\subsection{Binary partition and noisy counts}
%--------------------

A binary hierarchical partition of a set $\Omega$ of depth $r$ 
is a family of subsets $\Omega_\theta$ indexed by 
$\theta \in \{0,1\}^{\le r}$, where 
$$
\{0,1\}^{\le k} = \{0,1\}^0 \sqcup \{0,1\}^1 \sqcup \cdots \sqcup \{0,1\}^k, 
\quad k=0,1,2\dots, 
$$
and such that $\Omega_{\theta}$ is partitioned into $\Omega_{\theta0}$ and $\Omega_{\theta 1}$ for every $\theta \in \{0,1\}^{\le r-1}$.
By convention, the cube $\{0,1\}^0$ consists of a single element $\emptyset$. We usually drop the subscript $\emptyset$ and write $n$ instead of $n_\emptyset$. When $\theta \in \{0,1\}^j$, we call $j$ the {\em level} of $\theta$. We can also encode a binary hierarchical partition of $\Omega$ in a binary tree of depth $r$, where the root is labeled $\Omega$ and the $j$-th level of the tree encodes the subsets $\Omega_{\theta}$ for $\theta$ at level $j$.

%--------------------

Let  $(\Omega_\theta)_{\theta \in \{0,1\}^{\le r}}$ be a binary partition of $\Omega$. 
Given true data $(x_1,\ldots,x_n) \in \Omega^n$, the {\em true count} $n_\theta$ 
is the number of data points in the region $\Omega_\theta$, i.e. 
$$
n_\theta \coloneqq \abs{\left\{ i \in [n]: \; x_i \in \Omega_\theta \right\}}.
$$
We will convert true counts into {\em noisy counts} $n'_\theta$ by adding Laplacian noise;
all regions on the same level will receive noise of the same expected magnitude. 
Formally, we set
$$
n'_\theta \coloneqq \left( n_\theta + \l_\theta \right)_+, 
\quad \text{where} \quad
\l_\theta \sim \Lap_{\mathbb Z}(\s_j),
$$
and $j \in \{0,\ldots,r\}$ is the level of $\theta$. At this point, the magnitudes of the noise $\s_j$ can be arbitrary.

\subsection{Consistency}
%--------------------

The true counts $n_\theta$ are non-negative and {\em consistent}, i.e., the counts of subregions always add up to the count of the region:
$$
n_{\theta0}+n_{\theta1}=n_\theta
\quad \text{for all } \theta \in \{0,1\}^{\le r-1}.
$$
The noisy counts $n'_\theta$ are non-negative, but not necessarily consistent.
Algorithm~\ref{alg: consistency} enforces consistency by adjusting the counts iteratively, from top to bottom. In the case of the deficit, when the sum of the two subregional counts is smaller than the count of the region, we increase both subregional counts. In the opposite case or surplus, we decrease both subregional counts. Apart from this requirement, we are free to distribute the deficit or surplus between the subregional counts.  

It is convenient to state this requirement by considering a {\em product partial order} on $\Z_+^2$, where we declare that $(a_0,a_1) \preceq (b_0,b_1)$ if and only if $a_0 \le b_0$ and $a_1 \le b_1$. We call the two vectors $a,b \in \Z^2$ {\em comparable} if either $a \preceq b$ or $b \preceq a$.
Furthermore, $L(a)$ denotes the line $x+y=a$ on the plane. 

\begin{algorithm}[!ht]			
\caption{Consistency} \label{alg: consistency}
\begin{algorithmic} 

\State {\bf Input:} non-negative numbers $(n'_\theta)_{\theta \in \{0,1\}^{\le r}}$, where $n'$ is a nonnegative integer.
\State set $m \coloneqq n'$.

\For{$j=0,\ldots,r-1$}
    \For{$\theta \in \{0,1\}^j$}
        \State transform the vector $(n'_{\theta0},n'_{\theta1}) \in \Z_+^2$ into any comparable vector $(m_{\theta0},m_{\theta1}) \in \Z_+^2 \cap L(m_\theta)$. 	
    \EndFor
\EndFor

\State {\bf Output:} non-negative integers $(m_\theta)_{\theta \in \{0,1\}^{\le r}}$.

\end{algorithmic}
\end{algorithm}

At each step, Algorithm~\ref{alg: consistency} uses a transformation 
$
f_\theta: \Z_+^2 \to \Z_+^2 \cap L(m_\theta).
$
It can be chosen arbitrarily; the only requirement is that $f_\theta(x)$ be comparable with $x$.  The comparability requirement is natural and non-restrictive.
For example, the {\em uniform} transformation selects the closest point in the discrete interval $\Z_+^2 \cap L(m_\theta)$ in (say) the Euclidean metric. Alternatively, the {\em proportional} transformation selects the point in the discrete interval $\Z_+^2 \cap L(m_\theta)$ that is closest to the line that connects the input vector and the origin.

\subsection{Synthetic data}
%--------------------

Algorithm~\ref{alg: consistency} ensures that the output counts $m_\theta$ are non-negative, integer, and consistent. They are also private since they are a function of the noisy counts $n'_\theta$, which are private as we proved.  Therefore, the counts $m_\theta$ can be used to generate {\em private synthetic data} by putting $m_\theta$ points in cell $\Omega_\theta$. Algorithm~\ref{alg: synthetic data} makes this formal.

\begin{algorithm}[!ht]			
\caption{Private Measure Mechanism} \label{alg: synthetic data}
\begin{algorithmic}

\State {\bf Input:}   true data $\XX=(x_1,\ldots,x_n) \in \Omega^n$, noise magnitudes $\s_0,\ldots,\s_r > 0$.

\begin{description}
  \item[Compute true counts] Let $n_\theta$ be the number of data points in $\Omega_\theta$.
  \item[Add noise] Let 
  $n'_\theta \coloneqq (n_\theta + \l_\theta)_+$, where $\l_\theta \sim \Lap_{\Z}(\s_j)$ are i.i.d. random variables, 
  \item[Enforce consistency] Convert the noisy counts $(n'_\theta)$ to consistent counts $(m_\theta)$ using Algorithm~\ref{alg: consistency}.
  \item[Sample] Choose any $m_\theta$ points in each cell $\Omega_\theta$, $\theta \in \{0,1\}^r$ {independently of $\mathcal X$.}
\end{description}

\State {\bf Output:} the set of all these points as synthetic data 
  $\YY = (y_1,\ldots,y_m) \in \Omega^m$.

\end{algorithmic}
\end{algorithm}

%----------------

\subsection{Privacy and accuracy  of Algorithm \ref{alg: synthetic data}}
\label{sec: analysis of pmm}
%=====================
 
We first prove that Algorithm~\ref{alg: synthetic data} is differentially private. The proof idea is similar to the classic Laplacian mechanism. But now our noise is of differential scale for each level, so more delicate calculations are needed.

\begin{theorem}[Privacy of Algorithm~\ref{alg: synthetic data}]		\label{thm: privacy}
  The vector of noisy counts $(n_\theta + \l_\theta)$ in Algorithm~\ref{alg: synthetic data}
  is $\e$-differentially private, where
  $$
  \e = \sum_{j=0}^r \frac{1}{\s_j}.
  $$
  Consequently, the synthetic data $\YY$ generated by Algorithm~\ref{alg: synthetic data} 
  is $\e$-differentially private.
\end{theorem}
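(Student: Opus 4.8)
The plan is to show that the map from the true data $\XX$ to the vector of noisy counts $(n_\theta+\lambda_\theta)_{\theta\in\{0,1\}^{\le r}}$ is $\e$-differentially private with $\e=\sum_{j=0}^r 1/\s_j$, and then invoke the post-processing property of differential privacy to conclude that $\YY$ is $\e$-differentially private as well (the consistency step of Algorithm~\ref{alg: consistency} and the sampling step are randomized functions of the noisy counts that do not look at $\XX$ again, so they cannot degrade privacy).

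For the core claim I would argue exactly as in the classical Laplace mechanism, but tracking the level-dependent scales. Fix two adjacent datasets $\XX,\XX'$ differing in one data point, say $x_k$ is moved from one location to another. For each level $j$, exactly the two cells $\Omega_\theta$ at level $j$ containing the old and the new position of $x_k$ have their true counts changed, each by $\pm 1$; all other counts at that level are unchanged. Hence the vector of true counts $(n_\theta)$ and $(n'_\theta)$ differ in at most two coordinates per level, each by one, so $\sum_{\theta\in\{0,1\}^j}|n_\theta - n'_\theta|\le 2$ — but more importantly, along the unique root-to-leaf considerations one checks that at each level the total $\ell^1$ change is at most $2$, split between the ancestor chains of the old and new positions. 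The density of $\Lap_{\Z}(\s_j)$ at $t\in\Z$ is proportional to $e^{-|t|/\s_j}$, so for the product measure over all $\theta$ the likelihood ratio between the two datasets is bounded by
\[
\prod_{\theta}\exp\!\left(\frac{|n_\theta-n'_\theta|}{\s_{j(\theta)}}\right)
=\exp\!\left(\sum_{j=0}^r \frac{1}{\s_j}\sum_{\theta\in\{0,1\}^j}|n_\theta-n'_\theta|\right).
\]
The key combinatorial point is that moving a single data point changes, at each level $j$, the count of at most two cells, and in fact the contributions telescope so that $\sum_{\theta\in\{0,1\}^j}|n_\theta-n'_\theta|\le 2$ for each $j$; a sharper accounting using that the old and new ancestor chains each contribute a change of $1$ per level gives the bound $\sum_j \frac1{\s_j}\cdot 2$. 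To land exactly on $\e=\sum_j 1/\s_j$ rather than $2\sum_j 1/\s_j$, one uses the standard device of defining adjacency so that a neighbouring dataset differs by the \emph{addition or removal} of one point (so only one ancestor chain changes, contributing exactly $1$ at each level), or equivalently one absorbs the factor $2$ into the definition; I would follow whichever convention the paper has fixed in Section~\ref{sec: preliminaries}. Finally, since the clipping $t\mapsto (t)_+$ in $n'_\theta=(n_\theta+\lambda_\theta)_+$ is a deterministic post-processing of $n_\theta+\lambda_\theta$, proving privacy for the unclipped noisy counts $(n_\theta+\lambda_\theta)$ — which is what the theorem statement literally asks — suffices, and the clipped version inherits it.

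The main obstacle, and the only place requiring care, is the combinatorial bookkeeping of how many counts change and by how much when one data point is perturbed, across all $r+1$ levels simultaneously: one must verify that the per-level $\ell^1$ perturbation is controlled by a constant independent of $r$ (so that the exponents add up to $\sum_j 1/\s_j$ and not something growing with the tree size), and that the hierarchical/nested structure of the partition is what makes this work — a single point lies in exactly one cell per level, and these cells form a chain from the root to a leaf. Once that is pinned down, the rest is the textbook Laplace-mechanism ratio bound followed by post-processing.
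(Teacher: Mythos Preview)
Your approach matches the paper's: bound the discrete-Laplace likelihood ratio coordinatewise (the paper packages this as an ``inhomogeneous Laplace mechanism'' lemma with the weighted norm $\|z\|_{\ell^1(\sigma)}=\sum_\theta |z_\theta|/\sigma_{j(\theta)}$, but the content is exactly your displayed product), use that a single data point lies in one cell per level so the per-level $\ell^1$ change is bounded independently of $r$, and then invoke post-processing for the clipping, consistency, and sampling steps.

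On the factor of $2$ you flag as the main obstacle: the paper's proof reads ``differ in one point $x_i$'' so that at each level exactly one cell---the one containing $x_i$---has its count changed, by exactly $1$; this gives $\sum_{\theta\in\{0,1\}^j}|n_\theta-\tilde n_\theta|=1$ and hence $\e=\sum_{j}1/\sigma_j$ on the nose. Under the substitution adjacency you start with (moving $x_k$ from one location to another), two ancestor chains change and your $2\sum_j 1/\sigma_j$ would indeed be the correct constant. So your instinct to check the convention was right, but rather than hedging you should simply adopt the paper's reading; the ``sharper accounting'' you allude to does not recover the missing factor under substitution, only the add/remove convention does.
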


%------------------

Having analyzed the privacy of the synthetic data, we now turn to its accuracy.
It is determined by the magnitudes of the noise $\s_j$ and by 
the multiscale geometry of the domain $\Omega$. The latter is captured
by the diameters of the regions $\Omega_\theta$, specifically 
by their sum at each level, which we denote
\begin{equation}	\label{eq: Deltaj}
\Delta_j \coloneqq \sum_{\theta \in \{0,1\}^j} \diam(\Omega_\theta)
\end{equation}
and adopt the notation
$
\Delta_{-1} \coloneqq 
\Delta_0
= \diam(\Omega).
$
In addition to $\Delta_j$, 
the accuracy is affected by the {\em resolution} of the partition, 
which is the maximum diameter of the cells, denoted by
$$
\d \coloneqq \max_{\theta \in \{0,1\}^r} \diam(\Omega_\theta).
$$

\begin{theorem}[Accuracy of Algorithm~\ref{alg: synthetic data}]			\label{thm: accuracy}
  Algorithm~\ref{alg: synthetic data} that transforms true data $\XX$ into synthetic data $\YY$
  has the following expected accuracy in the Wasserstein metric:
  $$
  \E W_1 \left( \mu_\XX, \mu_\YY \right) 
  \le \frac{2\sqrt{2}}{n} \sum_{j=0}^r \s_j \Delta_{j-1}+\d.
  $$
  Here $\mu_\XX$ and $\mu_\YY$ are the empirical probability distributions
  on the true and synthetic data, respectively.
\end{theorem}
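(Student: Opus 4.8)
The plan is to bound the Wasserstein distance by building an explicit transport plan that moves mass level by level down the binary tree, and then to control the total cost by the noise that was injected at each level. First I would set up notation: for each level $j$, let $\mu_j$ be the probability measure that places mass $m_\theta/m$ (with $m = n' = (n+\lambda_\emptyset)_+$) uniformly... more precisely, concentrated on the cells $\Omega_\theta$ with $\theta$ at level $j$, distributed according to the consistent counts $m_\theta$; and let $\tilde\mu_j$ be the analogous measure built from the true counts $n_\theta/n$. Since the true counts are consistent, $\tilde\mu_r$ pushed to points is exactly $\mu_\XX$ up to moving mass within cells, which costs at most $\delta$; this accounts for the additive $\d$ term. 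The core of the argument is then to estimate $W_1(\mu_\XX,\mu_\YY)$ by a telescoping sum over the levels.

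The key mechanism is this: at level $j$, the discrepancy between the consistent counts $m_\theta$ and the (rescaled) true counts $n_\theta$ propagates from level $j-1$ and is perturbed by the fresh noise $\lambda_\theta \sim \Lap_{\Z}(\s_j)$ together with the effect of the positive-part truncation and the consistency transformation $f_\theta$. I would show that the amount of mass that must be transported "across" a cell $\Omega_\theta$ at level $j-1$ is controlled by $|\lambda_\theta|$ (the truncation $(\cdot)_+$ and the comparability requirement on $f_\theta$ only help — they never increase the discrepancy beyond what the raw noise created, because moving to the nearest comparable point on the line $L(m_\theta)$ is a contraction of the relevant $\ell^1$-type deviation). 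Transporting that mass within $\Omega_\theta$ costs at most $\diam(\Omega_\theta)$ per unit, so the cost incurred at level $j$ is at most $\frac{1}{n}\sum_{\theta\in\{0,1\}^{j-1}} |\lambda_\theta| \cdot \diam(\Omega_\theta)$, or after reindexing, a sum over level $j$ of terms like $\frac1n |\lambda_\theta|\,\Delta_{j-1}$-weighted contributions. Taking expectations and using $\E|\lambda_\theta| \le \sqrt{2}\,\s_j$ (the standard bound for a mean-zero Laplacian of scale $\s_j$; the factor $\sqrt 2$ together with another $\sqrt2$ from handling the two children cleanly gives the constant $2\sqrt2$), and summing over $j=0,\dots,r$ with the geometric weights $\Delta_{j-1}$, yields $\frac{2\sqrt2}{n}\sum_{j=0}^r \s_j \Delta_{j-1}$. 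Adding the within-cell cost $\d$ from the last level completes the bound.

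In more detail, the telescoping would be organized through the triangle inequality $W_1(\mu_\XX,\mu_\YY) \le \d + \sum_{j=0}^{r} W_1(\mu_j', \mu_{j-1}')$ where $\mu_j'$ is an intermediate coupling-friendly measure interpolating between the "true-count geometry at level $j$" and the "noisy-consistent-count geometry at level $j$", with the $\d$ absorbing the final discretization within cells of level $r$. Each $W_1$ increment is bounded by exhibiting the obvious transport: mass disagreements between parent and children, once consistency is enforced, live inside a single parent cell, so the transport distance is at most that cell's diameter, and the transported amount is at most the consistency correction, which is at most $|\lambda_\theta|$ in absolute value.

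The main obstacle I anticipate is the bookkeeping around the consistency transformation and the truncation: one must argue carefully that $f_\theta$ — an essentially arbitrary map to the nearest comparable lattice point on $L(m_\theta)$ — does not amplify errors, and that the positive-part operation $(n_\theta+\lambda_\theta)_+$ can only decrease the relevant deviation from $n_\theta$ (true when $n_\theta \ge 0$, which it is). Concretely, I expect to need a lemma stating that if $(a_0,a_1)$ is comparable with $(b_0,b_1)$ and $a_0+a_1$ equals the target sum $s$, then $|a_0 - c_0| + |a_1 - c_1|$ is bounded in terms of $|b_0+b_1 - s|$ and the original per-coordinate noise, where $(c_0,c_1)$ is the "ideal" proportional split of $s$. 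Getting the constant exactly $2\sqrt2$ (rather than something larger) will require handling the two children together rather than separately — using $\diam(\Omega_{\theta 0}), \diam(\Omega_{\theta 1}) \le \diam(\Omega_\theta)$ and a single application of the Laplacian moment bound per parent, not per child. Everything else is the routine construction of transport plans and linearity of expectation.
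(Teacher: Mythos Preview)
Your overall strategy matches the paper's: build a transport plan level by level, control the cost at level $j$ by (noise at level $j$) $\times$ $\Delta_{j-1}$, and absorb the within-cell relocation into the $\delta$ term. However, there are two places where your sketch diverges from the paper's proof and where the argument as written would not close.

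\textbf{The key lemma is not the one you propose.} The paper does not compare the consistent counts to an ``ideal proportional split'' $(c_0,c_1)$; it compares them directly to the true counts $(n_{\theta 0},n_{\theta 1})$ via a quantity called \emph{flux}. For two points $a,b\in\Z_+^2$, $\flux(a,b)$ is the $\ell^\infty$-distance from $a$ to the set of points comparable to $b$; operationally, it is the number of balls one must move between the two bins \emph{after} the total has already been made correct. The crucial observation (Lemma~4.9) is that because Algorithm~\ref{alg: consistency} outputs $(m_{\theta 0},m_{\theta 1})$ \emph{comparable} to $(n'_{\theta 0},n'_{\theta 1})$, one has
\[
\flux\big((n_{\theta 0},n_{\theta 1}),\,(m_{\theta 0},m_{\theta 1})\big)\;\le\;\max\big(|\lambda_{\theta 0}|,\,|\lambda_{\theta 1}|\big),
\]
i.e.\ the transfer between siblings is controlled by the \emph{children's} noise, not the parent's $|\lambda_\theta|$ as your sketch suggests. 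Your proposed lemma about $|a_0-c_0|+|a_1-c_1|$ tracks a coarser $\ell^1$-type deviation and would not directly yield this bound; the flux formulation is what makes the argument close cleanly with the constant $2\sqrt{2}$ (via $\E\max(|\lambda_{\theta 0}|,|\lambda_{\theta 1}|)\le 2\sqrt{2}\,\sigma_{j+1}$).

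\textbf{The normalization $n$ versus $m$ is a real issue you have skipped.} Your telescoping uses intermediate probability measures and divides by $n$ throughout, but $\mu_\YY$ is normalized by $m=(n+\lambda_\emptyset)_+\ne n$. The paper does not telescope over measures; it works with point sets and decomposes $\XX\mapsto\YY$ as: (1) add or remove $|m-n|$ points; (2) transfer flux-many points between siblings at each node; (3) relocate within leaf cells. Step (2) gives $W_1(\mu_{\XX_1},\mu_{\XX_2})\le D/m$, and converting the denominator from $m$ to $n$ requires a nontrivial case split on the sign of $\lambda_\emptyset$ (when $\lambda<0$ one must reinsert the removed points as a multiset $\XX_0$ and use $|\XX_2\cup\XX_0|=n$). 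Your measure-level telescoping would have to confront the same mismatch, and nothing in the sketch addresses it.
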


%------------------

The privacy and accuracy guarantees of Algorithm~\ref{alg: synthetic data} (Theorems~\ref{thm: privacy} and \ref{thm: accuracy}) hold for any choice of noise levels $\s_j$. By optimizing $\s_j$, we can achieve the best accuracy for a given level of privacy. 

\begin{theorem}[Optimized accuracy]			\label{thm: optimized accuracy}
  With the optimal choice of magnitude levels \eqref{eq: optimal noise}, 
  Algorithm~\ref{alg: synthetic data} that transforms true data $\XX$ into synthetic data $\YY$
  is $\e$-differential private, and 
  has the following expected accuracy in the $1$-Wasserstein distance:
  $$
  \E W_1 \left( \mu_\XX, \mu_\YY \right) 
  \le \frac{\sqrt{2}}{\e n} \Big( \sum_{j=0}^r \sqrt{\Delta_{j-1}} \Big)^2+ \d.
  $$
  Here $\mu_\XX$ and $\mu_\YY$ are the empirical measures
  of the true and synthetic data, respectively.
\end{theorem}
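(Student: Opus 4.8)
The plan is to derive this purely from the two results already proved in this section. Theorem~\ref{thm: accuracy} gives, for \emph{any} choice of positive noise scales $\s_0,\dots,\s_r$, the bound $\E W_1(\mu_\XX,\mu_\YY)\le \frac{2\sqrt2}{n}\sum_{j=0}^r \s_j\Delta_{j-1}+\d$, and Theorem~\ref{thm: privacy} says that these scales produce an $\e$-differentially private output precisely when $\sum_{j=0}^r \s_j^{-1}=\e$. The resolution $\d=\max_{\theta\in\{0,1\}^r}\diam(\Omega_\theta)$ is a purely geometric quantity that does not involve the $\s_j$ at all. Hence the whole statement reduces to the deterministic problem of minimizing $\sum_{j=0}^r \s_j\Delta_{j-1}$ over all $\s_0,\dots,\s_r>0$ subject to the privacy budget constraint $\sum_{j=0}^r \s_j^{-1}=\e$; the minimizer is the choice \eqref{eq: optimal noise} named in the theorem.

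Next I would solve this optimization by Cauchy--Schwarz (equivalently, by Lagrange multipliers, which gives the same answer). Writing $\sqrt{\Delta_{j-1}}=\sqrt{\s_j\Delta_{j-1}}\cdot \s_j^{-1/2}$ and applying Cauchy--Schwarz,
\[
\Big(\sum_{j=0}^r \sqrt{\Delta_{j-1}}\Big)^2 \;\le\; \Big(\sum_{j=0}^r \s_j\Delta_{j-1}\Big)\Big(\sum_{j=0}^r \s_j^{-1}\Big) \;=\; \e \sum_{j=0}^r \s_j\Delta_{j-1},
\]
so under the constraint $\sum_j\s_j^{-1}=\e$ we get $\sum_{j=0}^r \s_j\Delta_{j-1}\ge \frac1\e\big(\sum_{j=0}^r\sqrt{\Delta_{j-1}}\big)^2$, with equality exactly when $\s_j^2\Delta_{j-1}$ is independent of $j$, i.e. $\s_j=\frac{1}{\e}\big(\sum_{k=0}^r\sqrt{\Delta_{k-1}}\big)\big/\sqrt{\Delta_{j-1}}$ -- this is \eqref{eq: optimal noise}. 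I would then check this choice is admissible: every $\Delta_{j-1}$ is a sum of diameters of nonempty cells and hence positive (and if some level contributes only zero-diameter cells it can simply be omitted from the hierarchy), so each $\s_j$ is a well-defined positive real and $\Lap_\Z(\s_j)$ makes sense; and by construction $\sum_j\s_j^{-1}=\e$, so Theorem~\ref{thm: privacy} guarantees $\e$-differential privacy.

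Finally, substituting this optimal value into Theorem~\ref{thm: accuracy}: the noise-dependent term collapses to a constant multiple of $\frac1{\e n}\big(\sum_{j=0}^r\sqrt{\Delta_{j-1}}\big)^2$ while the resolution term $\d$ is untouched, which is exactly the asserted estimate. I do not expect a genuine obstacle in this argument: all the substance lies in the already-established Theorems~\ref{thm: accuracy} and \ref{thm: privacy}, and the remaining step is a one-parameter-family optimization solved in one line by Cauchy--Schwarz. The only points that deserve a sentence of care are the admissibility of the minimizing $\s_j$ (positivity of the $\Delta_{j-1}$, and degenerate levels) and the bookkeeping of the absolute constant when plugging back into Theorem~\ref{thm: accuracy}.
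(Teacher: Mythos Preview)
Your proposal is correct and follows the paper's approach: both reduce the question, via Theorems~\ref{thm: privacy} and~\ref{thm: accuracy}, to minimizing $\sum_{j}\sigma_j\Delta_{j-1}$ under the constraint $\sum_j\sigma_j^{-1}=\e$, identify the optimizer \eqref{eq: optimal noise}, and substitute back. The only cosmetic difference is that the paper writes out the Lagrange multiplier computation whereas you invoke Cauchy--Schwarz (which, as you note, is equivalent and arguably cleaner); your flag about the constant is apt, since a literal substitution of \eqref{eq: optimal noise} into the bound of Theorem~\ref{thm: accuracy} yields $\frac{2\sqrt{2}}{\e n}S^2+\d$ rather than the $\frac{\sqrt{2}}{\e n}S^2+\d$ stated.
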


\begin{corollary}[Optimized accuracy for hypercubes] \label{cor: PMM_hypercube}
When $\Omega=[0,1]^d$ equipped with the $\ell^{\infty}$ metric, with the optimal choice of magnitude levels \eqref{eq: optimal noise} and the optimal choice of 
\[r =\left\{\begin{aligned}
    &\log_2(\e n)-1 &\textrm{ if } d=1,\\
    &\log_2(\e n) &\textrm{ if }d\geq 2,
\end{aligned}
\right. \]
we have
\[\E W_1(\mu_\XX,\mu_\YY)\lesssim \left\{\begin{aligned}
    \frac{\log^2(\e n)}{\e n}, \quad \textrm{ if }d=1,\\
    (\e n)^{-1/d}, \quad \textrm{ if }d\geq 2.
\end{aligned}\right.\]
    
\end{corollary}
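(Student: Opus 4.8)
The plan is to derive Corollary~\ref{cor: PMM_hypercube} directly from Theorem~\ref{thm: optimized accuracy} by computing the two geometric quantities $\Delta_{j-1}$ and $\d$ for the dyadic partition of the cube. First I would fix the natural binary hierarchical partition of $\Omega=[0,1]^d$: at each step bisect the current box along one coordinate, cycling through the $d$ coordinates. After $j$ levels of bisection, each of the $2^j$ cells is a box whose side lengths are $2^{-\lceil j/d\rceil}$ in $\lfloor \cdot\rfloor$-many coordinates and $2^{-\lfloor j/d\rfloor}$ in the rest; since $\Omega$ carries the $\ell^\infty$ metric, $\diam(\Omega_\theta)=2^{-\lfloor j/d\rfloor}$. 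Hence
\[
\Delta_j=\sum_{\theta\in\{0,1\}^j}\diam(\Omega_\theta)=2^j\cdot 2^{-\lfloor j/d\rfloor}\asymp 2^{j(1-1/d)},
\]
and the resolution is $\d=2^{-\lfloor r/d\rfloor}\asymp 2^{-r/d}$. (For $d=1$ this degenerates to $\Delta_j=1$ for all $j$ and $\d=2^{-r}$.)

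Next I would plug these into the bound of Theorem~\ref{thm: optimized accuracy}, namely $\E W_1(\mu_\XX,\mu_\YY)\le \tfrac{\sqrt2}{\e n}\big(\sum_{j=0}^r\sqrt{\Delta_{j-1}}\big)^2+\d$. The sum $\sum_{j=0}^r \sqrt{\Delta_{j-1}}\asymp \sum_{j=0}^r 2^{j(1-1/d)/2}$ is a geometric series, so its behavior splits into cases according to the ratio $2^{(1-1/d)/2}$. For $d=1$ the ratio is $1$, so $\sum_{j=0}^r\sqrt{\Delta_{j-1}}\asymp r$, giving a contribution $r^2/(\e n)$; balancing against $\d=2^{-r}$ and choosing $r\asymp \log_2(\e n)$ (more precisely $r=\log_2(\e n)-1$) makes both terms $\asymp \log^2(\e n)/(\e n)$. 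For $d\ge 2$ the ratio exceeds $1$, so the geometric sum is dominated by its last term, $\sum_{j=0}^r\sqrt{\Delta_{j-1}}\asymp 2^{r(1-1/d)/2}$, and the first term becomes $\asymp 2^{r(1-1/d)}/(\e n)$; balancing this against $\d\asymp 2^{-r/d}$ forces $2^{r(1-1/d)}/(\e n)\asymp 2^{-r/d}$, i.e. $2^{r}\asymp \e n$, so $r=\log_2(\e n)$ and both terms are $\asymp (\e n)^{-1/d}$. This reproduces the stated choices of $r$ and the stated bounds.

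I would present this as: (i) describe the coordinate-cycling dyadic partition and record $\diam(\Omega_\theta)$, $\Delta_j$, $\d$ as above; (ii) substitute into Theorem~\ref{thm: optimized accuracy}; (iii) evaluate the geometric sum in the two regimes $d=1$ and $d\ge2$; (iv) optimize $r$ by balancing the two terms, arriving at the claimed $r$ and accuracy. One should also check the bookkeeping constraint that $r$ be a nonnegative integer and that $2^r \le \e n$ roughly matches the number of usable cells (the resolution term $\d$ cannot be driven below the scale at which counts become meaningless), but this is consistent with the chosen $r\asymp\log_2(\e n)$.

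The main obstacle is essentially bookkeeping rather than a conceptual difficulty: one must handle the floor functions $\lfloor j/d\rfloor$ carefully enough to see that $\Delta_j\asymp 2^{j(1-1/d)}$ with absolute implied constants (the fluctuation of $j/d$ around its floor contributes only a bounded multiplicative factor depending on $d$, which is harmless since we also divide by $d$ nowhere in this corollary — contrast the $d^{-1}$ in Corollary~\ref{cor: hypercube}), and then track constants through the geometric-series estimate and the final balancing so that the $d=1$ logarithm is genuinely squared and the $d\ge2$ exponent is exactly $-1/d$. A secondary point worth a sentence is that for $d\ge 2$ the "$+\d$" term and the main term are balanced only up to constants, so the final bound is $\lesssim (\e n)^{-1/d}$ rather than an equality, which is all that is claimed.
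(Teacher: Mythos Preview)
Your proposal is correct and follows essentially the same route as the paper: both use the coordinate-cycling dyadic partition of $[0,1]^d$, compute $\Delta_j\asymp 2^{j(1-1/d)}$ and $\d\asymp 2^{-r/d}$, plug into Theorem~\ref{thm: optimized accuracy}, split into $d=1$ (arithmetic sum) versus $d\ge 2$ (geometric sum), and balance the two terms to choose $r$. If anything, you are slightly more careful than the paper, which simply writes $\diam(\Omega_\theta)\asymp 2^{-j/d}$ without tracking the floor in $2^{-\lfloor j/d\rfloor}$; your remark that the floor only costs a bounded multiplicative factor is exactly the justification the paper leaves implicit.
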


\begin{remark}[Computational efficiency of Algorithm~\ref{alg: synthetic data}]
\label{rmk: PMM_running_time}
Since a binary hierarchical partition has $2^r$ cells in total, the running time of Algorithm~\ref{alg: synthetic data} is $O(2^r)$. When $\Omega=[0,1]^d$, with the same optimal choice of $r$ in  Corollary~\ref{cor: PMM_hypercube}, the running time of PMM becomes $O(\e dn)$.
\end{remark}

\subsection{Proof of Theorem~\ref{thm: accuracy}}
For the proof of Theorem~\ref{thm: accuracy}, we introduce a quantitative notion for the incomparability 
of two vectors on the plane.  For vectors $a,b \in \Z_+^2$, we define 
  $$
  \flux(a,b):=
  \begin{cases}
    0 & \text{if $a$ and $b$ are comparable,}\\
    \min \left( \,\abs{a_1-b_1},\, \abs{a_2-b_2} \right) & \text{otherwise}.
  \end{cases}
  $$

\begin{lemma}[Flux as incomparability]		\label{lem: flux incomparability}
  $\flux(a,b)$ is the $\ell_\infty$-distance from $a$ 
  to the set of points that are comparable to $b$.
\end{lemma}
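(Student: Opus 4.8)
The plan is to unwind the definitions and verify the claimed equality directly, using a case analysis on whether $a$ and $b$ are comparable. Write $C(b) := \{\,c \in \Z_+^2 : c \text{ and } b \text{ are comparable}\,\}$ for the set of points comparable to $b$, and let $\dist_\infty(a, C(b)) := \min_{c \in C(b)} \|a - c\|_\infty$ be the quantity we want to identify with $\flux(a,b)$.

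First I would dispose of the comparable case: if $a$ and $b$ are comparable, then $a \in C(b)$, so $\dist_\infty(a, C(b)) = 0 = \flux(a,b)$, and we are done. So assume $a$ and $b$ are incomparable; without loss of generality (by swapping the two coordinates, which preserves both $\flux$ and $\ell_\infty$ distances) suppose $a_1 < b_1$ and $a_2 > b_2$. Then $\flux(a,b) = \min(b_1 - a_1,\, a_2 - b_2)$. For the upper bound $\dist_\infty(a, C(b)) \le \flux(a,b)$, I would exhibit an explicit comparable point: set $t := \flux(a,b)$ and consider $c := (a_1 + t,\, a_2 - t)$. If $t = b_1 - a_1 \le a_2 - b_2$ then $c = (b_1,\, a_2 - t)$ with $a_2 - t \ge b_2$, so $b \preceq c$; if $t = a_2 - b_2 \le b_1 - a_1$ then $c = (a_1 + t,\, b_2)$ with $a_1 + t \le b_1$, so $c \preceq b$. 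Either way $c \in C(b)$ and $\|a - c\|_\infty = t = \flux(a,b)$.

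The reverse inequality $\dist_\infty(a, C(b)) \ge \flux(a,b)$ is the only point requiring a small argument. Take any $c \in C(b)$. If $c \preceq b$, then $c_2 \le b_2 < a_2$, so $\|a - c\|_\infty \ge a_2 - c_2 \ge a_2 - b_2 \ge \flux(a,b)$. Symmetrically, if $b \preceq c$, then $c_1 \ge b_1 > a_1$, so $\|a - c\|_\infty \ge c_1 - a_1 \ge b_1 - a_1 \ge \flux(a,b)$. Taking the minimum over all $c \in C(b)$ gives $\dist_\infty(a, C(b)) \ge \flux(a,b)$, completing the proof.

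I do not expect a genuine obstacle here; the statement is essentially a geometric observation about the "staircase" set $C(b)$, which is the complement of the open lower-left and upper-right quadrants relative to $b$. The only mild care needed is the reduction-by-symmetry step and keeping track of which of the two quadrant constraints is active, i.e. that moving $a$ toward $C(b)$ optimally means moving it toward the nearer of the two coordinate hyperplanes $\{x_1 = b_1\}$ and $\{x_2 = b_2\}$. Everything else is a one-line triangle-type inequality in each coordinate.
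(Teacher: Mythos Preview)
Your proof is correct and follows essentially the same approach as the paper's: handle the comparable case trivially, then in the incomparable case split $C(b)$ into the lower quadrant $\{c \preceq b\}$ and the upper quadrant $\{b \preceq c\}$ and compute the $\ell_\infty$-distance from $a$ to each. The paper's proof simply asserts these two distances are $|a_1-b_1|$ and $|a_2-b_2|$ and takes the minimum; you are more explicit, giving a witness point for the upper bound and a coordinate-wise inequality for the lower bound, but the idea is identical. (One cosmetic slip in your closing remark: $C(b)$ is the union of the closed lower-left and upper-right quadrants at $b$, not their complement.)
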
 

For example, if $a=(1,9)$ and $b=(6,7)$, then $\flux(a,b)=2$. 
Note that $a$ has a distance $2$ to the vector $(1,7)$ which is comparable with $b$.

\begin{lemma}[Flux as transfer]		\label{lem: flux transfer}
  Suppose we have two bins with $a_1$ and $a_2$ balls in them.
  Then one can achieve $b_1$ and $b_2$ balls in these bins by:
  \begin{enumerate}[(a)]
    \item first making the total number of balls correct by adding a total of $(b_1+b_2)-(a_1-a_2)$ balls to the two bins (or removing, if that number is negative);
    \item then transferring $\flux \left( (a_1,a_2), \, (b_1,b_2) \right)$ balls from one bin to the other.
  \end{enumerate}
\end{lemma}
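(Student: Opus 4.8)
The plan is to give an explicit two–phase routing of the balls and verify it meets the requirements; the whole content is elementary bookkeeping on $\Z_+^2$, with the flux emerging exactly as the size of the pure transfer in phase (b). (Note: the displayed formula should read $(b_1+b_2)-(a_1+a_2)$; I will write $N$ for this net change.)

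First I would dispose of the comparable case. If $(a_1,a_2)$ and $(b_1,b_2)$ are comparable then $\flux=0$, so phase (b) is vacuous and phase (a) alone suffices: if $(a_1,a_2)\preceq(b_1,b_2)$, add $b_i-a_i\ge 0$ balls to bin $i$ (a total of $N\ge 0$); if $(b_1,b_2)\preceq(a_1,a_2)$, remove $a_i-b_i\ge 0$ balls from bin $i$ (a total of $-N\ge 0$).

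Now the incomparable case. Relabelling the two bins if necessary, I may assume $a_1<b_1$ and $a_2>b_2$, so that $T:=\flux(a,b)=\min(b_1-a_1,\,a_2-b_2)$. Define the intermediate state $a':=(b_1-T,\,b_2+T)$. I would then check three points: (i) $a'\in\Z_+^2$, since $0\le T\le b_1-a_1$ gives $a_1'=b_1-T\ge a_1\ge 0$ and $a_2'=b_2+T\ge 0$, and all quantities are integers; (ii) $a_1'+a_2'=b_1+b_2$, so passing from $a$ to $a'$ is a legal phase (a) with net change $N=(b_1+b_2)-(a_1+a_2)$ — in fact it touches only one bin, because $T=b_1-a_1$ forces $a_1'=a_1$ (then $N\le 0$ and we only remove from bin $2$), while $T=a_2-b_2$ forces $a_2'=a_2$ (then $N\ge 0$ and we only add to bin $1$); (iii) transferring exactly $T$ balls from bin $2$ to bin $1$ sends $a'=(b_1-T,b_2+T)$ to $(b_1,b_2)=b$, which is phase (b) with $\flux(a,b)$ transfers. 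Together these give the claimed procedure.

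I do not anticipate any real obstacle; in particular Lemma~\ref{lem: flux incomparability} is not needed here, since in the incomparable case one uses the defining formula $\flux(a,b)=\min(\abs{a_1-b_1},\abs{a_2-b_2})$ directly. The only point requiring a moment's care is the ``WLOG'' reduction together with the observation that each phase acts in the permitted way — phase (a) being a pure addition or pure removal confined to a single bin, and phase (b) a pure transfer of the prescribed size — after which the arithmetic in (i)–(iii) is immediate.
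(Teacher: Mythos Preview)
Your proposal is correct and follows essentially the same route as the paper: dispose of the comparable case by pure addition/removal, then in the incomparable case fix signs by relabelling, split according to which coordinate realises the minimum in $\flux$, and in each sub-case give the explicit one-bin add/remove followed by the transfer of $\flux(a,b)$ balls. Your presentation via the intermediate state $a'=(b_1-T,b_2+T)$ is a slightly cleaner packaging of exactly the same arithmetic the paper carries out, and your observation about the typo $(b_1+b_2)-(a_1-a_2)$ is correct.
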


For example, suppose that one bin has $1$ ball and the other has $9$. Then we can achieve $6$ and $7$ balls in these bins by first adding $3$ balls to the first bin and transferring $2$ balls from the second to the first bin. As we noted above, $2$ is the flux between the vectors $(1,9)$ and $b=(6,7)$. 

Lemma~\ref{lem: flux transfer} can be generalized to the hierarchical binary partition of $\Omega$ as follows. 
\begin{lemma} \label{lem: tranform data}
  Consider any data set $\XX \in \Omega^n$, 
  and let $(n_\theta)_{\theta \in \{0,1\}^r}$ be its counts.  
  Consider any consistent vector of non-negative integers $(m_\theta)_{\theta \in \{0,1\}^r}$.
  Then one can transform $\XX$ into a set $\ZZ \in \Omega^m$ that has counts $(m_\theta)_{\theta \in \{0,1\}^r}$ by:
  \begin{enumerate}[(a)]
    \item first making the total number of points correct by adding a total of $m-n$ points to $\Omega$ (or remove, if that number is negative);
    \item then transferring $\flux \left( (n_{\theta0},n_{\theta1}), \, (m_{\theta0},m_{\theta1}) \right)$ points from $\Omega_{\theta0}$ to $\Omega_{\theta1}$ or vice versa, for all $j=0,\ldots,r-1$ and $\theta \in \{0,1\}^j$. 
  \end{enumerate}  
\end{lemma}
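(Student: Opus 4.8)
\textbf{Proof proposal for Lemma~\ref{lem: tranform data}.}

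The plan is to iterate Lemma~\ref{lem: flux transfer} down the levels of the binary tree, using the consistency of $(m_\theta)$ to control the bookkeeping. First I would perform step (a): since $\sum_{\theta \in \{0,1\}^r} m_\theta = m$ (by consistency, summing up the tree to the root gives $m_\emptyset = m$) and $\sum_\theta n_\theta = n$, adding or removing $m-n$ points to $\Omega$ — placed anywhere — produces an intermediate data set whose total count matches $m$. One clean way to arrange step (a) so that it interacts well with step (b) is to insert or delete these points so that after the insertion the \emph{root-level} discrepancy is absorbed; the precise placement does not affect the Wasserstein cost bound, since step (a) is accounted separately in the proof of Theorem~\ref{thm: accuracy}.

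Next I would carry out step (b) level by level, $j = 0, 1, \ldots, r-1$. Maintain the invariant that after processing level $j$, the current data set $\ZZ^{(j)}$ has the correct counts $m_\theta$ for every $\theta \in \{0,1\}^{\le j}$. The base case $j=0$ (the root) holds after step (a). For the inductive step, fix $\theta \in \{0,1\}^j$; by the invariant the region $\Omega_\theta$ currently holds exactly $m_\theta$ points, split as $(n'_{\theta 0}, n'_{\theta 1})$ between the two children, where $n'_{\theta 0} + n'_{\theta 1} = m_\theta = m_{\theta 0} + m_{\theta 1}$ (the last equality is consistency of $(m_\theta)$). Since the totals already agree, Lemma~\ref{lem: flux transfer}(b) applies with no further addition: transferring $\flux\big((n'_{\theta 0}, n'_{\theta 1}),\,(m_{\theta 0}, m_{\theta 1})\big)$ points from $\Omega_{\theta 0}$ to $\Omega_{\theta 1}$ (or vice versa) makes both child counts correct. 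Doing this simultaneously over all $\theta \in \{0,1\}^j$ — which is consistent because these transfers happen inside disjoint regions $\Omega_\theta$ — establishes the invariant at level $j+1$. After level $r-1$ we have the desired counts $(m_\theta)_{\theta \in \{0,1\}^r}$, and the resulting set is the claimed $\ZZ \in \Omega^m$.

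The one subtlety — and the step I expect to need the most care — is matching the flux quantity in the statement, which is computed from the \emph{original} counts $n_{\theta 0}, n_{\theta 1}$, against the flux actually used in the iteration, which is computed from the current counts $n'_{\theta 0}, n'_{\theta 1}$ after step (a) and after the higher-level transfers have already modified $\Omega_\theta$. These need not be equal in general, so the honest reading is that the transfers in step (b) move \emph{at most} (or exactly, with the right choice of placement in step (a) and of which points to move) the stated number; for the purposes of the accuracy bound in Theorem~\ref{thm: accuracy} an upper bound suffices, and the key point is that the number of points crossing between $\Omega_{\theta 0}$ and $\Omega_{\theta 1}$ is governed by $\flux$ of the relevant pair on the line $L(m_\theta)$. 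I would therefore state the transfer count as $\flux\big((n_{\theta 0}, n_{\theta 1}),\,(m_{\theta 0}, m_{\theta 1})\big)$ with the understanding (to be spelled out) that step (a)'s points can be routed down the tree so that the incremental cost at each internal edge is exactly this flux, invoking Lemma~\ref{lem: flux incomparability} to see that this is the minimal $\ell_\infty$-adjustment making the pair comparable and hence consistent-extendable.
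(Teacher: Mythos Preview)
Your overall strategy—iterate Lemma~\ref{lem: flux transfer} down the tree after a global step~(a)—is exactly the paper's, and you correctly isolate the one real difficulty: the statement demands the flux computed from the \emph{original} counts $(n_{\theta0},n_{\theta1})$, whereas a naive top-down pass sees modified counts $(n'_{\theta0},n'_{\theta1})$ by the time it reaches $\theta$. Where the proposal falls short is in resolving this.

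The fallback ``an upper bound suffices'' is not safe: $\flux\big((n'_{\theta0},n'_{\theta1}),(m_{\theta0},m_{\theta1})\big)$ can exceed $\flux\big((n_{\theta0},n_{\theta1}),(m_{\theta0},m_{\theta1})\big)$. For instance, if $(n_{\theta0},n_{\theta1})=(5,5)$ and $(m_{\theta0},m_{\theta1})=(3,3)$ then the stated flux is $0$, but careless removal of the four surplus points (all from $\Omega_{\theta0}$, say) leaves $(n'_{\theta0},n'_{\theta1})=(1,5)$ with flux $2$. And ``route step~(a)'s points down the tree'' is not enough either: even when $m=n$ so step~(a) is vacuous, the sibling transfers at higher levels still alter child counts and must themselves be routed. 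The paper's fix is to apply Lemma~\ref{lem: flux transfer} at node $\theta$ with the \emph{original} pair $(n_{\theta0},n_{\theta1})$ and target $(m_{\theta0},m_{\theta1})$; its part~(a)—adding or removing $m_\theta-n_\theta$ balls inside $\Omega_\theta$—is not a fresh operation but the retroactive specification of where the net change produced at the parent landed. The levels are thus \emph{interlocked}: each level's placement decisions are fixed only by the next level's invocation of Lemma~\ref{lem: flux transfer}, and since that lemma's proof makes each child either only gain or only lose, this deferred bookkeeping is always consistent. Spelling this out is precisely the content you flagged as ``to be spelled out''; once it is, the transfer at every node is exactly the flux in the statement, and Lemma~\ref{lem: flux incomparability} plays no role here (it is used later, in Lemma~\ref{lem: flux and laplacian}).
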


Combining the concept of the $\flux$ and our algorithm, the following two lemmas are  useful in the proof of Theorem~\ref{thm: accuracy}.
\begin{lemma}       \label{lem: flux and laplacian}
  In Algorithm~\ref{alg: synthetic data}, we have
  $$
  \flux \left( (n_{\theta0},n_{\theta1}), \, (m_{\theta0},m_{\theta1}) \right)
  \le \max \left(\,\abs{\l_{\theta0}}, \abs{\l_{\theta1}} \right)
  $$
  for all $j=0,\ldots,r-1$ and $\theta \in \{0,1\}^j$. 
\end{lemma}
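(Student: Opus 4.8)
\textbf{Proof proposal for Lemma~\ref{lem: flux and laplacian}.}

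The plan is to trace what Algorithm~\ref{alg: consistency} does at the step where it processes the node $\theta$, and to compare the input vector $(n'_{\theta0}, n'_{\theta1})$ to the output vector $(m_{\theta0}, m_{\theta1})$ using Lemma~\ref{lem: flux incomparability} (flux as $\ell_\infty$-distance to the comparable set). First I would observe that the algorithm guarantees, by construction of the transformation $f_\theta$, that $(m_{\theta0}, m_{\theta1})$ is comparable with $(n'_{\theta0}, n'_{\theta1})$. Hence, by Lemma~\ref{lem: flux incomparability} applied with $a = (n_{\theta0}, n_{\theta1})$ and $b = (m_{\theta0}, m_{\theta1})$, it suffices to exhibit a single point that is comparable with $(m_{\theta0}, m_{\theta1})$ and within $\ell_\infty$-distance $\max(|\l_{\theta0}|, |\l_{\theta1}|)$ of $(n_{\theta0}, n_{\theta1})$.

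The natural candidate for that point is $(n'_{\theta0}, n'_{\theta1})$ itself. On the one hand, $(n'_{\theta0}, n'_{\theta1})$ is comparable with $(m_{\theta0}, m_{\theta1})$ by the comparability requirement on $f_\theta$ noted above. On the other hand, I would bound the coordinatewise distance from $(n_{\theta0}, n_{\theta1})$ to $(n'_{\theta0}, n'_{\theta1})$: for each child $\eta \in \{\theta0, \theta1\}$ we have $n'_\eta = (n_\eta + \l_\eta)_+$, and since $n_\eta \ge 0$ the truncation at $0$ can only move the value closer to $n_\eta$, so $|n'_\eta - n_\eta| \le |\l_\eta|$. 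Taking the max over the two children gives $\|(n_{\theta0},n_{\theta1}) - (n'_{\theta0}, n'_{\theta1})\|_\infty \le \max(|\l_{\theta0}|, |\l_{\theta1}|)$. Combining the two observations with Lemma~\ref{lem: flux incomparability} yields the claimed bound on $\flux((n_{\theta0},n_{\theta1}), (m_{\theta0},m_{\theta1}))$.

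One subtlety I would check carefully is that the vector fed to the transformation $f_\theta$ at node $\theta$ is genuinely $(n'_{\theta0}, n'_{\theta1})$, the \emph{noisy} counts of the children, and not a previously modified pair — this is true because Algorithm~\ref{alg: consistency} processes levels top-to-bottom and at level $j$ it only overwrites the level-$(j+1)$ entries $m_{\theta0}, m_{\theta1}$, leaving the level-$(j+1)$ noisy counts of other nodes untouched; the children of $\theta$ are still holding their original $n'$-values when $\theta$ is processed. I expect this bookkeeping point — making sure the comparability is between $(n'_{\theta0},n'_{\theta1})$ and $(m_{\theta0},m_{\theta1})$ rather than between already-adjusted quantities — to be the only real obstacle; the rest is the elementary truncation inequality $|(x)_+ - y| \le |x - y|$ for $y \ge 0$ together with a direct application of Lemma~\ref{lem: flux incomparability}.
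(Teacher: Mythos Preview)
Your proposal is correct and follows essentially the same approach as the paper: set $a=(n_{\theta0},n_{\theta1})$, $b=(m_{\theta0},m_{\theta1})$, use that Algorithm~\ref{alg: consistency} forces $(n'_{\theta0},n'_{\theta1})$ to be comparable with $b$, bound $\|a-(n'_{\theta0},n'_{\theta1})\|_\infty$ by $\max(|\l_{\theta0}|,|\l_{\theta1}|)$ via the $1$-Lipschitzness of $x\mapsto x_+$ (your truncation inequality), and conclude with Lemma~\ref{lem: flux incomparability}. The only difference is cosmetic: the paper phrases the distance bound as ``$x\mapsto x_+$ is $1$-Lipschitz'' rather than your ``truncation can only move the value closer to $n_\eta$,'' and it does not spell out the bookkeeping point you flag about which vector is fed into $f_\theta$.
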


\begin{lemma}		\label{lem: adding points}
  For any finite multisets $U \subset V$ such that all elements in $U$ are from $\Omega$, one has
  $$
  W_1(\mu_U, \mu_V) \le \frac{\abs{V \setminus U}}{\abs{V}} \cdot \diam(\Omega).
  $$
\end{lemma}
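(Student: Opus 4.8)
The plan is to bound $W_1(\mu_U,\mu_V)$ directly by exhibiting an admissible coupling of the two empirical measures whose transport cost is at most $\frac{|V\setminus U|}{|V|}\diam(\Omega)$. Write $k \coloneqq |U|$ and $N \coloneqq |V|$, so $|V\setminus U| = N-k$; the case $U = V$ is trivial, so assume $N > k$. Since $U$ is a sub-multiset of $V$, the empirical measure of $V$ decomposes as
\[
\mu_V = \frac{k}{N}\,\mu_U + \frac{N-k}{N}\,\mu_{V\setminus U},
\]
where $\mu_{V\setminus U}$ is the empirical measure of the multiset $V\setminus U$ (all of whose elements, like those of $U$, lie in $\Omega$).

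Next I would define the coupling $\gamma$ on $\Omega\times\Omega$ by
\[
\gamma \coloneqq \frac{k}{N}\,(\mathrm{id},\mathrm{id})_{\#}\mu_U \;+\; \frac{N-k}{N}\,\bigl(\mu_U \otimes \mu_{V\setminus U}\bigr),
\]
i.e.\ with probability $k/N$ we place unit mass on the diagonal distributed according to $\mu_U$, and with probability $(N-k)/N$ we sample the two coordinates independently from $\mu_U$ and from $\mu_{V\setminus U}$. Checking the marginals: the first marginal is $\frac{k}{N}\mu_U + \frac{N-k}{N}\mu_U = \mu_U$, and the second is $\frac{k}{N}\mu_U + \frac{N-k}{N}\mu_{V\setminus U} = \mu_V$, so $\gamma \in \Gamma(\mu_U,\mu_V)$. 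Its cost is
\[
\int_{\Omega\times\Omega}\rho(x,y)\,\dd\gamma(x,y) = \frac{N-k}{N}\int_\Omega\int_\Omega \rho(x,y)\,\dd\mu_U(x)\,\dd\mu_{V\setminus U}(y) \le \frac{N-k}{N}\diam(\Omega),
\]
because the diagonal part contributes $0$ and every pair of points in $\Omega$ is within distance $\diam(\Omega)$. Taking the infimum over couplings in the definition of $W_1$ gives the claimed bound. (As an alternative route, the same estimate follows from the Kantorovich--Rubinstein duality: for a $1$-Lipschitz $f$, if $A$ and $B$ denote the averages of $f$ over $U$ and over $V\setminus U$, then $\int f\,\dd\mu_U - \int f\,\dd\mu_V = \frac{N-k}{N}(A-B)$, and $A-B \le \diam(\Omega)$ since each $f(u)-f(w)$ with $u,w\in\Omega$ is at most $\diam(\Omega)$.)

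There is no substantial obstacle here; the only point requiring a little care is the multiset bookkeeping used to write $\mu_V$ as a convex combination of $\mu_U$ and $\mu_{V\setminus U}$, and the observation that the mass shared by $\mu_U$ and $\mu_V$ need not be transported at all — this is precisely what yields the sharp constant rather than a factor of $2$.
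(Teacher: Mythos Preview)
Your proof is correct and is essentially the same transport argument as the paper's: the paper leaves the $k/N$ fraction of mass shared by $\mu_V$ and $\mu_U$ in place and redistributes each of the $|V\setminus U|$ remaining atoms of $\mu_V$ uniformly over $U$, which is exactly your coupling $\frac{k}{N}(\mathrm{id},\mathrm{id})_{\#}\mu_U + \frac{N-k}{N}(\mu_U\otimes\mu_{V\setminus U})$. Your write-up is more explicit in verifying the marginals, but the underlying idea is identical.
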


\begin{proof}\emph{(Proof of Theorem~\ref{thm: accuracy})}
Owing to Lemma~\ref{lem: tranform data} and Lemma~\ref{lem: flux and laplacian}, the creation of synthetic data from the true data $\XX \mapsto \YY$, described by Algorithm~\ref{alg: synthetic data}, can be achieved by the following three steps.
\begin{enumerate}[1.]
  \item \label{st: add/remove} Transform the $n$-point input set $\XX$ to an $m$-point set $\XX_1$ by adding or removing $\abs{m-n}$ points.
  \item \label{st: move} Transform $\XX_1$ to $\XX_2$ by moving at most $\max \left( \,\abs{\l_{\theta0}}, \abs{\l_{\theta1}} \right)$ many data points for each $j=0,1,\dots,r-1$ and $\theta \in \{0,1\}^{j}$ 
between the two parts of the region $\Omega_\theta$. 
  \item \label{st: shake} Transforms $\XX_2$ to the output data $\YY$ by relocating points within their cells. 
\end{enumerate} 

We will analyze the accuracy of these steps one at a time.

{\em Analyzing Step~\ref{st: move}. }
The total distance the points are moved at this step is bounded by 
\begin{equation}	\label{eq: D}
\sum_{j=0}^{r-1}\sum_{\theta \in \{0,1\}^{j}} \max \left( \,\abs{\l_{\theta0}}, \abs{\l_{\theta1}} \right) \diam(\Omega_\theta)
\eqqcolon D.
\end{equation}
Since $\abs{\XX_1} = m$, it follows that 
\begin{equation}	\label{eq: X1 X2}
W_1(\mu_{\XX_1}, \mu_{\XX_2}) \le \frac{D}{m}.
\end{equation}

{\em Combining Steps \ref{st: add/remove} and \ref{st: move}. }
Recall that step~\ref{st: add/remove} transforms the input data $\XX$ with 
$\abs{\XX}=n$ into $\XX_1$ with $\abs{\XX_1}=m =n+\sign(\lambda)\cdot \lfloor|\lambda|\rfloor$ by adding or removing points, 
depending on the sign of $\l$.

{\sl Case 1: $\l \ge 0$.} Here $\XX_1$ is obtained from $\XX$ by adding $\lfloor\l\rfloor$ points, so Lemma~\ref{lem: adding points} gives
$$
W_1(\mu_\XX, \mu_{\XX_1}) 
\le \frac{\l}{m} \cdot \Delta_0.
$$
Combining this with \eqref{eq: X1 X2} by triangle inequality, we conclude that 
$$
W_1(\mu_\XX, \mu_{\XX_2}) 
\le \frac{\l \Delta_0 + D}{m}
\le  \frac{\l \Delta_0 + D}{n}.
$$

{\sl Case 2: $\l < 0$.} Here $\XX_1$ is obtained from $\XX$ by removing a set $\XX_0$ of $n-m= \lfloor\abs{\l}\rfloor$ points. Furthermore, by our analysis of step~\ref{st: move}, $\XX_2$ is obtained from $\XX_1$ by moving points the total distance at most $D$. Therefore, $\XX_2 \cup \XX_0$ (as a multiset) is obtained from $\XX = \XX_1 \cup \XX_0$ by moving points the total distance at most $D$, too. (The points in $\XX_0$ remain unmoved.) Since $\abs{\XX} = n$, it follows that 
$$
W_1(\mu_\XX, \mu_{\XX_2 \cup \XX_0}) \le \frac{D}{n}.
$$
Moreover, Lemma~\ref{lem: adding points} gives
$$
W_1(\mu_{\XX_2}, \mu_{\XX_2 \cup \XX_0}) 
\le \frac{\abs{\XX_0}}{\abs{\XX_2 \cup \XX_0}} \cdot \diam(\Omega)
\le \frac{\abs{\l} \Delta_0}{n}.
$$
(Here we used that the multiset $\XX_2 \cup \XX_0$ has the same number of points as $\XX$, which is $n$.)
Combining the two bounds by triangle inequality, we obtain 
\begin{equation}	\label{eq: X X2}
W_1(\mu_\XX, \mu_{\XX_2}) 
\le \frac{\abs{\l} \Delta_0 + D}{n}.
\end{equation}
In other words, this bound holds in both cases.

{\em Analyzing Step 3. } 
This step is the easiest to analyze: since $\YY$ is obtained from $\XX_2$ by relocating the points are relocated within their cells, and the maximal diameter of the cells is $\d$, we have
$
W_1(\mu_{\XX_2},\mu_\YY) \le \d.
$
Combining this with \eqref{eq: X X2} by triangle inequality, we conclude that
$$
W_1(\mu_\XX, \mu_\YY) 
\le \frac{\abs{\l} \Delta_0 + D}{n} + \d.
$$

{\em Taking expectation. } 
Recall the definition of $D$ from \eqref{eq: D}. We get
$$
\E W_1(\mu_\XX, \mu_\YY) 
\le \frac{1}{n} \left[\E \left[\,\abs{\l}\right] \Delta_0
	+ \sum_{j=0}^{r-1} \sum_{\theta \in \{0,1\}^j} \E \left[ \max \left( \,\abs{\l_{\theta0}}, \abs{\l_{\theta1}} \right) \right]
		 \diam(\Omega_\theta)
 \right] + \d.
$$
Since $\l \sim \Lap_\Z(\s_0)$, by \eqref{eq: Lap variance} we have $\E \left[\,\abs{\l}\right] \le (\E(\l)^2)^{1/2} \le \sqrt{2} \s_0.$ Similarly, since $\l_{\theta0}$ and $\l_{\theta1}$ are independent $\Lap_\Z(\s_{j+1})$ random variables,  
$\E \left[ \max \left(\, \abs{\l_{\theta0}}, \abs{\l_{\theta1}} \right) \right] \le 2\sqrt{2} \s_{j+1}$. 
Substituting these estimates and rearranging the terms of the sum will complete the proof. 
\end{proof}

 \section*{Acknowledgements}
R.V. acknowledges support from NSF DMS--1954233, NSF DMS--2027299, U.S. Army 76649--CS, and NSF-Simons Research Collaborations on the Mathematical and Scientific Foundations of Deep Learning. Y.Z. is partially supported by NSF-Simons Research Collaborations on the Mathematical and Scientific Foundations of Deep Learning.

The authors thank March Boedihardjo, Girish Kumar, and Thomas Strohmer for helpful discussions.

\bibliographystyle{plain}
\bibliography{ref}

\appendix

\section{Additional proofs}\label{sec: appendix proof}

\subsection{Proof of Proposition~\ref{prop:Ln(f)}}
\begin{proof}
    We will apply the chaining argument (see, e.g.,     \cite[Chapter 8]{vershynin2018high}) to deduce a bound similar to Dudley's inequality.
    
    \noindent {\bf Step 1: (Finding nets)} 
    
    Define $\varepsilon_j = 2^{-j}$ for $j\in \mathbb{Z}$ and consider an $\varepsilon_j$-net $T_j$ of $\cF$ of size $\cN(\cF,\varepsilon_j,\|\cdot\|_{\infty})$. Then for any $f\in \cF$ and any level $j$, we can find the closest element in the net, denoted $\pi_j(f)$. In other words, there exists $\pi_j(f)$ s.t.
    \[\pi_j(f)\in T_j,\quad \|f-\pi_j(f)\|_\infty\leq \e_j.\]
   Let $m$ be a positive integer to be determined later, we have the telescope sum together with triangle inequality
    \begin{align*}
        \E\sup_{f\in \cF} \frac{1}{n}\left|\sum_{i=1}^{n}f(X_i)\lambda_i\right|\leq &\,\E\sup_{f\in \cF} \frac{1}{n}\left|\sum_{i=1}^{n}\left(f-\pi_m(f)\right)(X_i)\cdot\lambda_i\right|\\
        &\, + \sum_{j=j_0+1}^m \E\sup_{f\in \cF} \frac{1}{n}\left|\sum_{i=1}^{n}\left(\pi_{j}(f)-\pi_{j-1}(f)\right)(X_i)\cdot\lambda_i\right|.
    \end{align*}
    Note that when $j=j_0$ is small enough, $\Omega$ can be covered by $\pi_{j_0}(f) \equiv 0$.
    
    \noindent {\bf Step 2: (Bounding the telescoping sum)} 
    
    For a fixed $j_0 < j\leq m$, we consider the quantity
    \[\E\sup_{f\in \cF} \frac{1}{n}\left|\sum_{i=1}^{n}\left(\pi_{j}(f)-\pi_{j-1}(f)\right)(X_i)\cdot\lambda_i\right|.\]
    For simplicity we will denote $a_i=a_i(f)$ as the coefficient $\frac{1}{n}\left(\pi_{j}(f)-\pi_{j-1}(f)\right)(X_i)$. Then we have 
    \[|a_i|\leq \frac{1}{n}\left\|f-\pi_{j-1}(f)\right\|_\infty+\frac{1}{n}\left\|\pi_{j}(f)-f\right\|_\infty\leq \frac{1}{n}(\e_j+\e_{j-1})\leq \frac{3\e_j}{n}.\]
    
Since $\{\lambda_i\}_{i\in [n]}$ are independent subexponential random variables,   we can apply Bernstein's inequality to the sum $\sum_i a_i \lambda_i$. Let $K=3\e_j $, we have 
    \begin{align*}
        \Pr{\left|\sum_{i=1}^n a_i\lambda_i\right|>t}&\leq 2\exp\left[-c \min\left(\frac{t^2}{ \|a\|_2^2} , \frac{t}{\|a\|_\infty}\right)\right]\\
        &\leq 2\exp\left[-c \min\left(\frac{t^2}{K^2/n} , \frac{t}{K/n}\right)\right]\\
        & = 2\exp\left[-cn \min\left(\frac{t^2}{K^2} , \frac{t}{K}\right)\right],
    \end{align*}
    Then we can use the union bound to control the supreme. Define $N = |T_j|\cdot|T_{j-1}|\leq |T_j|^2$,
    \begin{align*}
        \Pr{\sup_{f\in\cF}\left|\sum_{i=1}^n a_i\lambda_i\right|>t}&\leq 2N\exp\left[-cn \min\left(\frac{t^2}{K^2} , \frac{t}{K}\right)\right]\wedge 1\\
        & = 2\exp\left[\log N-cn \min\left(\frac{t^2}{K^2} , \frac{t}{K}\right)\right]\wedge 1\\
        & \leq  2\exp\left(\log N-cn \frac{t^2}{K^2}\right)\wedge 1 \\
            & \quad  +  2\exp\left(\log N-cn \frac{t}{K}\right)\wedge 1
    \end{align*} 
    and hence
    \begin{align*}
        \E \sup_{f\in\cF} \left|\sum_{i=1}^n a_i\lambda_i\right|=& \int_{0}^\infty 2\exp\left(\log N-cn \frac{t^2}{K^2}\right)\wedge 1 \dd t \\
            & \quad+ \int_{0}^\infty 2\exp\left(\log N-cn \frac{t}{K}\right)\wedge 1 \dd t\\
        :=& \,I_2+I_1.
    \end{align*}
    We will compute them separately. 
    \begin{align*}
        I_1 &= \int_{0}^\infty 2\exp\left(\log N-cn \frac{t}{K}\right)\wedge 1 \dd t\\
        & = \frac{K \log N}{cn} + \int_{{K \log N}/{cn}}^\infty 2\exp\left(\log N-cn \frac{t}{K}\right)\\
        & = \frac{K \log N}{cn} + \int_{0}^\infty 2\exp\left(-cn \frac{t}{K}\right)\\
        &\leq CK\frac{\log N}{n}
    \end{align*}
    \begin{align*}
        I_2 &= \int_{0}^\infty 2\exp\left(\log N-cn \frac{t^2}{K^2}\right)\wedge 1 \dd t\\
        & = \sqrt{\frac{K^2 \log N}{cn}} + \int_{\sqrt{{K^2 \log N}/{cn}}}^\infty 2\exp\left(\log N-cn \frac{t^2}{K^2}\right)\\
        & = \sqrt{\frac{K^2 \log N}{cn}} + \int_{0}^\infty 2\exp\left(-cn \frac{t^2}{K^2}-2\sqrt{cn\log N}\frac{t}{K}\right)\\
        &\leq \sqrt{\frac{K^2 \log N}{cn}} + \frac{K}{\sqrt{cn\log N}}\\
        &\leq CK\sqrt{\frac{\log N}{n}}.
    \end{align*}
    Therefore we concluded that for a fixed level $j$, 
    \[\E \sup_{f\in\cF} \left|\sum_{i=1}^n a_i\lambda_i\right|\leq CK\left({\frac{\log N}{n}}+\sqrt{\frac{\log N}{n}}\right) \lesssim \e_j\left({\frac{\log N}{n}}+\sqrt{\frac{\log N}{n}}\right)\]
    
    \noindent {\bf Step 3: (Bounding the last entry)}
    
    For the last entry in the telescoping sum, similarly, we denote $a_i:=\frac{1}{n}\left(f-\pi_{m}(f)\right)(X_i)$ and we have $|a_i|\leq \e_m/n$. Then
    \[\sup_{f\in\cF} \left|\sum_{i=1}^n a_i\lambda_i\right| \leq \frac{\e_m}{n}\sum_{i=1}^n |\lambda_i|,\]
    and the expectation satisfies
    \[\E\sup_{f\in\cF} \left|\sum_{i=1}^n a_i\lambda_i\right| \leq \frac{\e_m}{n}\sum_{i=1}^n \E|\lambda_i| \lesssim \e_m .\]
    
    \noindent {\bf Step 4: (Combining the bound and choosing $m$)}
    Combining the two integrals together, we deduce that for any  $X_1,\dots,X_n\in \Omega$,
    \begin{align*}
        \E\sup_{f\in \cF} \frac{1}{n}\left|\sum_{i=1}^{n}f(X_i)\lambda_i\right|\leq C \left(\e_m + \sum_{j=j_0+1}^m\right. &\e_j\Bigg({\frac{\log \cN(\cF,\varepsilon_j,\|\cdot\|_{\infty})}{n}} \\
        &\left.+\sqrt{\frac{\log \cN(\cF,\varepsilon_j,\|\cdot\|_{\infty})}{n}}\Bigg)\right).
    \end{align*}
    Then for any $\alpha>0$, we can always choose $m$ such that $2\alpha\leq \e_m < 4\alpha$ and bound the sum above with integral
    \begin{align}
        \label{eq: entropy integral}
        \E\sup_{f\in \cF} \frac{1}{n}\left|\sum_{i=1}^{n}f(X_i)\lambda_i\right| \leq  C\Bigg(2\alpha &+ \frac{1}{\sqrt{n}}\int_{\alpha}^\infty \sqrt{\log \cN(\cF,u,\|\cdot\|_{\infty})}\dd u \notag\\
        &+\frac{1}{{n}}\int_{\alpha}^\infty {\log \cN(\cF,u,\|\cdot\|_{\infty})}\dd u\Bigg).
    \end{align}
	Taking infimum over $\alpha$ completes the proof of the first inequality.

Now assume $\mathcal F$ is the set of all functions $f$ with $\|f\|_{\Lip}\leq 1$.
  From     \cite[Lemma 4.2]{gottlieb2016adaptive}, we can bound the covering number of $\mathcal F$ by the covering number of $\Omega$ as follows:
    	\[\log \cN(\cF,u,\|\cdot\|_{\infty}) \leq \log ({8}/{u})\, \cN(\Omega, {u}/{2},\rho).\]
    As a result, for any $\alpha>0$,
    \[L_{n}(\mathcal F)\leq C \left(2\alpha + \frac{1}{\sqrt{n}}\int_{\alpha}^{\infty} \sqrt{\log ({8}/{u})\, \cN(\Omega, {u}/{2},\rho)}\dd u+\frac{1}{{n}}\int_{\alpha}^{\infty} \log ({8}/{u})\, \cN(\Omega, {u}/{2},\rho)\dd u\right).\]	
        This completes the proof.
\end{proof}

\subsection{Proof of Corollary~\ref{cor: hypercube}}
\begin{proof}
     For $\Omega=[0,1]^d$ with $l_{\infty}$-norm, we have $\diam(\Omega) = 1$ and the covering number 
    \[\cN([0,1]^d, u, \|\cdot\|_\infty) \leq u^{-d}.\]
    Then, as the domain $\Omega = [0,1]^d$ is connected and centered, we can apply the bound for the covering number of $\mathcal F$ from     \cite[Theorem 17]{von2004distance}:
    \[\cN(\cF,u,\|\cdot\|_{\infty})\leq \left(2\left\lceil{2}/u \right\rceil+1\right) 2^{\cN\left([0,1]^d, u/2, \|\cdot\|_\infty\right)},\]
    \[\Longrightarrow \log \cN(\cF,u,\|\cdot\|_{\infty}) \lesssim \cN(\Omega, {u}/{2},\|\cdot\|_\infty)\lesssim (u/2)^{-d}.\]
  Applying the inequality above to \eqref{eq: entropy integral}, we get
    \begin{equation}
    \label{eq: Dudley integral}
        L_{n}\leq C \left(2\alpha + \frac{1}{\sqrt{n}}\int_{\alpha}^{\infty} (u/2)^{-d/2}\dd u+\frac{1}{{n}}\int_{\alpha}^{\infty} (u/2)^{-d}\dd u\right).
    \end{equation}

    Compute the integral for the case $d=2$ and $d\geq 3$,
        \[L_{n}(f)\leq \left\{\begin{aligned}
        &C \Bigg(2\alpha + \frac{2}{\sqrt{n}}\log\frac2{\alpha} + \frac{2}{n}\left(\frac{\alpha}{2}\right)^{-1}\Bigg) &\quad \textrm{ if } d=2.\\
        &C \Bigg(2\alpha + \frac{2}{\sqrt{n}}\cdot\frac{1}{\frac{d}{2}-1}\left(\frac{\alpha}{2}\right)^{1-\frac{d}{2}} + \frac{2}{n}\cdot\frac{1}{d-1}\left(\frac{\alpha}{2}\right)^{1-d}\Bigg) \quad &\textrm{ if } d\geq 3.
    \end{aligned}\right.\]
    Choosing $\alpha = 2n^{-1/d}$ finishes the cases for $d\geq 2$.

    When $d=1$, the Dudley integral in \eqref{eq: Dudley integral} is divergent. However, note that $\diam(\cF)\leq 2$ and hence $\log\cN(\cF,u,\|\cdot\|_{\infty})=0$ for $u>1$. From \eqref{eq: entropy integral}, we have 
    \begin{align*}
        L_n(\cF) &\leq C \left(2\alpha + \frac{1}{\sqrt{n}}\int_{\alpha}^{1} (u/2)^{-1/2}\dd u+\frac{1}{{n}}\int_{\alpha}^{1} (u/2)^{-1}\dd u\right)\\
        &\leq C \left(2\alpha + \frac{2(\sqrt{2}-\sqrt{\alpha})}{\sqrt{n}}+\frac{2}{n}\log\frac{1}{\alpha}\right).
    \end{align*}
    The optimal choice of $\alpha$ is  $\alpha\sim n^{-1/2}$, which gives us the result for $d=1$.
\end{proof}

\subsection{Proof of Proposition~\ref{prop: direct_priv}}
\begin{proof}
It suffices to prove that the steps from $\cX$ to the sign measure $\nu$ in Algorithm~\ref{alg: direct} is $\e$-differentially private since the remaining steps are only based on $\nu$. Notice that both $\mu_\cY, \nu$ are supported on $Y_1,\dots, Y_m$,  we can identify the two discrete measures  as $m$ dimensional vectors in the standard simplex, denoted $\overline{\mu_\cY}, \overline{\nu}$, respectively. Consider two data sets $\cX_1$ and $\cX_2$ differ in one point. Suppose we deduced  ${\mu_{\cY_1}}, {\mu_{\cY_2}}$ and $\nu_1,\nu_2$ through the first four steps of Algorithm~\ref{alg: direct} from $\cX_1, \cX_2$, respectively. We know two vectors $\overline{\mu_{\cY_1}}, \overline{\mu_{\cY_2}}$ are different  at  one coordinate, where the difference is bounded by $1/n$. 

Then
    \begin{align*}
        \frac{\Pr{\nu_1=\eta}}{\Pr{\nu_2=\eta}} & = \prod_{i=1}^{m}\frac{\Pr{\lambda_i=n(\eta-\overline{\mu_{\cY_1}})_i}}{\Pr{\lambda_i=n(\eta-\overline{\mu_{\cY_2}})_i}} = \prod_{i=1}^{m}\frac{\exp(-\e n|(\eta-\overline{\mu_{\cY_1}})_i|)}{\exp(-\e n|(\eta-\overline{\mu_{\cY_2}})_i|)}\\
        &\leq \exp\left(\e n \|{\mu_{\cY_2}} - {\mu_{\cY_1}}\|_1\right)  \leq e^\e.
    \end{align*}
By writing $\Pr{\nu_i\in S}=\sum_{\eta\in S}\Pr{\nu_i=\eta}$ for $i=1,2$, the inequality above implies Algorithm~\ref{alg: direct} is $\e$-differentially private.
\end{proof}

\subsection{Proof of Proposition~\ref{prop:LP}}
\begin{proof}
	For two signed  measures $\tau,\nu$ supported on $\cY$, the $\dbl$-distance between $\tau$ and $\nu$ is 
	\[\dbl(\tau,\nu) = \sup_{\|f\|_{\mathrm{Lip}\leq 1}} \left|\sum_{i=1}^m f(y_i) \left(\tau(\{y_i\}) - \nu(\{y_i\})\right)\right|.\]
	For simplicity, we denote  $f_i = f(y_i)$, $\nu_i = \nu(\{y_i\})$ and $\tau_i = \tau(\{y_i\})$. Then we note that for any $f$ with $\|f\|_{\mathrm{Lip}}\leq 1$, only $(f_i)_{i\in [m]}$ matters in the definition above. Therefore, suppose $\nu$ and $\tau$ are fixed,  computing the   $\dbl$-distance is equivalent to the following linear programming problem:
 
	\begin{align*}
		\max\quad &  \sum_{i=1}^m (\nu_i-\tau_i)f_i \\
		\mbox{s.t.}\quad & f_i - f_j\leq \|y_i - y_j\|_\infty, & {\forall i,j\leq m, i\not=j},\\
		& -f_i +  f_j\leq \|y_i - y_j\|_\infty, & \forall i,j\leq m, i\not=j,\\
		& -1 \leq f_i \leq 1, & \forall i\leq m.
	\end{align*}
	After a change of variable $f_i' = f_i + 1$, we can rewrite it as
	\begin{align*}
	 \max\quad & \sum_{i=1}^m (\nu_i-\tau_i)f_i' - \left(\nu(\Omega)-1\right) \\
	\mbox{s.t.}\quad & f_i' - f_j'\leq \|y_i - y_j\|_\infty, & \forall i,j\leq m, i\not=j,\\
	& -f_i' +  f_j'\leq \|y_i - y_j\|_\infty, & \forall i,j\leq m, i\not=j,\\
	& 0\leq f_i' \leq 2, & \forall i\leq m.
	\end{align*}
	
 Next, we can consider the dual problem of the linear programming problem above. The duality theory in linear programming   \cite[Chapter 12]{vazirani2001approximation} showed that the original problem and the dual problem have the same optimal solution. Let $u_{ij}, u_{ij}'\geq 0$ be the dual variable for the linear constraints about $f_i'-f_j'$ and $-f_i'+f_j'$, and let $v_i\geq 0$ be the dual variable for the equation $f_i'\leq 2$. As the linear programming above is in the standard form, by the duality theory, it is equivalent to
	\begin{align*}
	\min \quad & \sum_{i\neq j}  \|y_i - y_j\|_\infty(u_{ij}+u_{ij}') + 2\, v_i - \left(\nu(\Omega)-1\right)\\
	\mbox{s.t.}\quad & \sum_{j\neq i} (u_{ij} - u_{ij}') + v_i\geq \nu_i-\tau_i, & \forall {i\leq m},\\
	& u_{ij},u_{ij}',v_i \geq 0  & \forall i,j\leq m, i\not=j.
	\end{align*}
	
To find the minimizer $\tau$ for a given $\nu$, we regard $\tau_i$ as variables and add the constraints of $\tau$ being a probability measure. Also, we can eliminate the constant $\nu(\Omega)-1$ in the target function. So we get the linear programming problem:
	\begin{align*}
		\min \quad & \sum_{i\neq j}  \|y_i - y_j\|_\infty(u_{ij}+u_{ij}') + 2\, v_i\\
		\mbox{s.t.}\quad & \sum_{j\neq i} (u_{ij} - u_{ij}') + v_i +\tau_i \geq \nu_i, & \forall {i\leq m},\\
		& \sum_{i=1}^m \tau_i = 1, &\\ 
		& u_{ij},u_{ij}',v_i, \tau_i \geq 0  & \forall i,j\leq m, i\not=j.
	\end{align*}
There are $2m^2$ variables in total and $m+1$ linear constraints, and the minimizer $(\tau_i)_{i=1}^m$ is what we want.  
\end{proof}

\subsection{Proof of Theorem~\ref{thm: accuracy_direct}}

\begin{proof}
    We transformed the original data measure $\mu_\cX$ with three steps: 
    $\mu_\cX\longrightarrow \mu_\cY\longrightarrow \nu \longrightarrow \hat{\nu}.$ 
    
    \noindent{\bf Step 1:} For the first step in the algorithm, we have $W_1(\mu_\cX,\mu_\cY) \leq \max_i \diam(\Omega_i)$. This follows from the definition of $1$-Wasserstein distance.
    
    \noindent{\bf Step 2:} In this step, $\nu$ is no longer a probability measure, and we  consider $\dbl(\mu_\cY,\nu)$ instead: 
    \begin{align}
        \mathbb E\dbl(\mu_\cY,\nu) &= \mathbb E\sup_{\|f\|_{\mathrm{Lip}}\leq 1}\left|\int f\dd\mu_\cY - \int f\dd\nu\right| \nonumber\\
        &= \mathbb E\sup_{\|f\|_{\mathrm{Lip}}\leq 1}\left|\sum_{i=1}^m f(y_i)\left(\frac{n_i}{n} + \frac{\lambda_i}{n} - \frac{n_i}{n}\right)\right| =  \frac{m}{\e n} \widetilde L_m(\mathcal F).\label{eq:dBL_Laplace}
    \end{align}
    
    \noindent {\bf Step 3:} For the last step, we have $\dbl(\nu,\hat{\nu}) \leq \dbl(\mu_\cY,\nu)$ because  $\hat{\nu}$ is the closest probability measure to $\nu$ from Proposition~\ref{prop:LP}. As a result, we have
    \begin{align*}
    W_1(\mu,\hat\nu)= \dbl(\mu,\hat \nu)& \leq \dbl(\mu_\cX,\mu_\cY) +\dbl(\mu_\cY,\nu) + \dbl(\nu,\hat{\nu})\\
    &\leq W_1(\mu_\cX,\mu_\cY) + 2\dbl(\mu_\cY,\nu)\leq \max_{i} \diam(\Omega_i) +2\dbl(\mu_\cY,\nu).
    \end{align*}
    After taking the expectation, we can apply \eqref{eq:dBL_Laplace} to get the desired inequality.
\end{proof}

\subsection{Proof of Corollary \ref{cor: accuracy_direct_hypercube}}
\begin{proof}
Using Theorem~\ref{thm: accuracy_direct}, we have 
\[\E W_1(\mu_\XX,\hat\nu)\leq \max_i \diam(\Omega_i) + \frac{2 m}{\e n} \widetilde L_m(\cF).\]
By assumption we have $\max_i \diam(\Omega_i)\asymp m^{-1/d} \asymp (\e n)^{-1/d}$. And by \ref{cor: hypercube} we have the bound for the Laplacian complexity
    \[\widetilde L_{m}(\mathcal F)  \leq \left\{\begin{aligned}
    & C (\e n)^{-1/2}\quad &\textrm{ if } d=1,\\
     &{C\log n} \cdot (\e n)^{-1/2}\quad &\textrm{ if } d=2,\\
    &{C} {(\e n)^{-{1}/{d}}}\quad &\textrm{ if } d\geq 3.
    \end{aligned}\right.\]
When $d\geq 3$, the two terms are comparable. And when $d=1,2$, the Laplacian complexity dominates the error. Combining the two inequalities gives the result.
\end{proof}

\subsection{Proof of Theorem~\ref{thm: privacy}}

 {Theorem \ref{thm: privacy} can be obtained by applying the parallel composition lemma \cite{dwork2014algorithmic}.} Here we present a self-contained proof by considering an inhomogeneous version of the classical Laplacian mechanism \cite{dwork2014algorithmic}.

\begin{lemma}[Inhomogeneous Laplace mechanism]		\label{lem: Laplace mechanism}
  Let $F: \Omega^n \to \R^k$ be any map, 
  $s = (s_i)_{i=1}^k \in \R_+^k$ be a fixed vector,
  and $\l = (\l_i)_{i=1}^k$ be a random vector 
  with independent coordinates $\l_i \sim \Lap_\Z(s_i)$.
  Then the map $x \mapsto F(x)+\l$ is $\e$-differentially private, where
  $$
  \e = \sup_{x,\tilde{x}} \norm{F(x)-F(\tilde{x})}_{\ell^1(s)}.
  $$
  Here the supremum is over all pairs of input vectors in $\Omega^n$ 
  that differ in one coordinate, and 
  $\norm{z}_{\ell^1(s)} = \sum_{i=1}^k \abs{z_i}/{s_i}$.
\end{lemma}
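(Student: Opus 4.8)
The plan is to mimic the likelihood-ratio computation behind the classical Laplace mechanism, carried out coordinatewise and now with the discrete Laplacian pmf from Appendix~\ref{sec: appendix discrete}. First I would record the law of the output $x \mapsto F(x)+\l$. Since $\l$ is supported on $\Z^k$, for each input $x$ the output lives on the coset $F(x)+\Z^k$; to compare two adjacent inputs $x,\tilde{x}$ we use that (as in the count application, where $F$ is integer-valued) $F(x)-F(\tilde{x}) \in \Z^k$, so the two output laws are supported on the \emph{common} countable set $F(x)+\Z^k$ and are mutually absolutely continuous. Writing the pmf of $\Lap_\Z(s)$ as $p_s(n) = c_s\, e^{-|n|/s}$ for $n \in \Z$, where $c_s$ depends only on $s$, we get for $z$ in the common support
\[
\Pr{F(x) + \l = z} \;=\; \prod_{i=1}^k p_{s_i}\!\left(z_i - F(x)_i\right) \;=\; \Big(\prod_{i=1}^k c_{s_i}\Big)\exp\!\Big(-\sum_{i=1}^k \frac{|z_i - F(x)_i|}{s_i}\Big).
\]

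Next I would form the likelihood ratio for adjacent $x,\tilde{x}$: the normalizing constants $\prod_i c_{s_i}$ cancel because they do not depend on the input, leaving
\[
\frac{\Pr{F(x) + \l = z}}{\Pr{F(\tilde{x}) + \l = z}} \;=\; \exp\!\Big(\sum_{i=1}^k \frac{|z_i - F(\tilde{x})_i| - |z_i - F(x)_i|}{s_i}\Big) \;\le\; \exp\!\big(\norm{F(x)-F(\tilde{x})}_{\ell^1(s)}\big) \;\le\; e^{\e},
\]
where the first inequality is the reverse triangle inequality applied in each coordinate, $|z_i - F(\tilde{x})_i| - |z_i - F(x)_i| \le |F(x)_i - F(\tilde{x})_i|$, and the last is the definition of $\e$ as the supremum of $\norm{F(x)-F(\tilde{x})}_{\ell^1(s)}$ over adjacent inputs. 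Finally, to pass from point masses to arbitrary measurable $S \subseteq \R^k$, I would sum over the common support: $\Pr{F(x)+\l \in S} = \sum_{z \in S \cap (F(x)+\Z^k)} \Pr{F(x)+\l = z} \le e^{\e}\sum_z \Pr{F(\tilde{x})+\l = z} = e^{\e}\,\Pr{F(\tilde{x})+\l \in S}$, which is exactly $\e$-differential privacy.

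There is no deep obstacle here; the only points needing care are (i) ensuring the two output laws share a support so the ratio is meaningful — this is where one uses that $F$ takes integer values (true for the count map in the applications), and the statement should be read with that hypothesis or with $F$ restricted to $\Z^k$-valued maps; and (ii) verifying that the discrete-Laplacian normalizers depend only on the scales $s_i$ and not on the data, which is immediate from the pmf formula. To then deduce Theorem~\ref{thm: privacy}, I would apply the lemma with $F$ the vector of true counts $(n_\theta)_{\theta \in \{0,1\}^{\le r}}$ and scale $\s_j$ on each level-$j$ coordinate: adding or removing one data point changes these counts only along a single root-to-leaf path of the partition tree, by $\pm 1$ at each of the $r+1$ levels, so $\norm{F(x)-F(\tilde{x})}_{\ell^1(s)} = \sum_{j=0}^r 1/\s_j$; the consistency enforcement (Algorithm~\ref{alg: consistency}) and the sampling step are deterministic post-processing of the noisy counts, hence preserve $\e$-differential privacy.
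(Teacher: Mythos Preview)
Your proposal is correct and follows essentially the same route as the paper: both compute the pointwise likelihood ratio using the product form of the discrete Laplacian pmf, cancel the normalizing constants, and bound the exponent via the triangle inequality to obtain $\norm{F(x)-F(\tilde{x})}_{\ell^1(s)} \le \e$. Your added remark that one needs $F(x)-F(\tilde{x})\in\Z^k$ for the two output laws to share a support is a valid caveat that the paper leaves implicit (it simply writes $y\in\Z^k$), and your sketch of the deduction of Theorem~\ref{thm: privacy} matches the paper's argument exactly.
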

\begin{proof}[Proof of Lemma~\ref{lem: Laplace mechanism}]
Suppose $x,\tilde{x}\in \Omega^n$ differs in exactly one coordinate. Consider the density functions of the inputs having the same output $y = F(x)+\lambda = F(\tilde{x})+\tilde{\lambda}\in \Z^k$. We have
\begin{align*}
    \frac{\Pr{F(x)+\lambda = y}}{\Pr{F(\tilde{x})+\tilde{\lambda} = y}} &= \frac{\Pr{\lambda = y-F(x)}}{\Pr{{\tilde{\lambda}}={y-F(\tilde{x})}}} \\ 
    & = \frac{\prod_{i=1}^k \exp\left(-\frac{|(y-F(x))_i|}{s_i}\right) }{\prod_{i=1}^k \exp\left(-\frac{|(y-F(\tilde{x}))_i|}{s_i}\right)} \\
    & = \exp\left(-\sum_{i=1}^k \frac{1}{s_i}\left(|(y-F(x))_i| - |(y-F(\tilde{x}))_i|\right)\right) \\
    & \leq \exp\left(\|F(x)-F(\tilde{x})\|_{\ell^1(s)}\right)\\
    &\leq e^\e
\end{align*}
Therefore, we know $x\mapsto F(x)+\lambda$ is $\e$-differentially private.
\end{proof}

\begin{proof}[Proof of Theorem~\ref{thm: privacy}]
Consider the map $F(\XX)=(n_\theta)$ that transforms the input data into the vector of counts. 
Suppose a pair of input data $\XX$ and $\tilde{\XX}$ differ in one point $x_i$.
Consider the corresponding vectors of counts $(n_\theta)$ and $(\tilde{n}_\theta)$. 
For each level $j=0,\ldots,r$, the vectors of counts differ for a single $\theta \in \{0,1\}^j$, 
namely for the $\theta$ that corresponds to the region $\Omega_\theta$ containing $x_i$.
Moreover, whenever such a difference occurs, we have $\abs{n_\theta-\tilde{n}_\theta}=1$. 
Thus, extending the vector $(\s_j)_{j=0}^r$ to $(\s_\theta)_{\theta \in \{0,1\}^{\le r}}$
trivially (by converting $\s_j$ to $\s_\theta$ for all $\theta \in \{0,1\}^j$), we have
$$
\norm{F(\XX)-F(\tilde{\XX})}_{\ell^1(\s)}
=\sum_{j=0}^r \frac{1}{\s_j} \sum_{\theta \in \{0,1\}^j} \abs{n_\theta-\tilde{n}_\theta}
=\sum_{j=0}^r \frac{1}{\s_j}
= \e.
$$
Applying Lemma~\ref{lem: Laplace mechanism}, we conclude that the map 
$\XX \mapsto (n_\theta+\l_\theta)$ is $\e$-differentially private.
\end{proof}

\subsection{Proof of Theorem~\ref{thm: optimized accuracy}}
\begin{proof}
We will use the Lagrange multipliers procedure to find the optimal choices of $\sigma_j$. Given the maximal layer $r$, recall Theorem~\ref{thm: privacy}, we should use our privacy budget as
\[\e = \sum_{j=0}^r \frac{1}{\sigma_j}.\]
Therefore, we aim to minimize the accuracy bound with the specified privacy budget, namely
\begin{align*}
    &\min \E W_1(\mu_\cX,\mu_\cY)\quad  
    \;\textrm{s.t. }\e = \sum_{j=0}^r \frac{1}{\sigma_j}.
\end{align*}
Recall the result in Theorem~\ref{thm: accuracy}. Here $\e,n$ are given and $\delta$ is fixed as long as we determine the maximal level $r$. So the minimization problem is
\begin{align*}
    &\min \sum_{j=0}^r \sigma_j\Delta_{j-1} \quad 
    \;\textrm{s.t. }\e = \sum_{j=0}^r \frac{1}{\sigma_j}.
\end{align*}
Consider the Lagrangian function
\[f(\sigma_0,\dots,\sigma_r; t) := \sum_{j=0}^r \sigma_j\Delta_{j-1} - t\left(\sum_{j=0}^r \frac{1}{\sigma_j} - \e\right)\]
and the corresponding equation 
\[\frac{\partial f}{\partial \sigma_0} = \dots = \frac{\partial f}{\partial \sigma_r} = \frac{\partial f}{\partial t}=0.\]

One can easily check that the equations above have a unique solution 
\begin{equation}	\label{eq: optimal noise}
\s_j = \frac{S}{\e \sqrt{\Delta_{j-1}}}
\quad \text{where} \quad
S = \sum_{j=0}^r \sqrt{\Delta_{i-1}}.
\end{equation}
and it is indeed a minimal point for $f(\sigma_0,\dots,\sigma_r; t)$.

As a result, if we fix $\e$ and want Algorithm~\ref{alg: synthetic data} to be $\e$-differentially private, we should choose the noise magnitudes as \eqref{eq: optimal noise}. Substituting these noise magnitudes into the accuracy Theorem~\ref{thm: accuracy}, 
we see that the accuracy gets bounded by $\frac{\sqrt{2}}{\e n} S^2  + \d$.
\end{proof}

%------------------
\subsection{Proof of Corollary~\ref{cor: PMM_hypercube}}
\begin{proof}
Let $\Omega=[0,1]$ with the $\ell^{\infty}$ metric. The natural hierarchical binary decomposition of $[0,1]$ (cut through the middle) makes subintervals of length $\diam(\Omega_\theta) = 2^{-j}$ 
for $\theta \in \{0,1\}^j$, so $\Delta_j = 1$ for all $j$, and the resolution is $\d = 2^{-r}$.
Theorem~\ref{thm: optimized accuracy} makes  $\e$-differential private synthetic data
with accuracy
$$
\E W_1 \left( \mu_\XX, \mu_\YY \right) 
\le \frac{\sqrt{2}(r+1)^2}{\e n}+2^{-r}.
$$
A nearly optimal choice for $r$ is $r = \log_2(\e n)-1$, which yields
$$
\E W_1 \left( \mu_\XX, \mu_\YY \right) 
\le \frac{(2+\sqrt{2}) \log_2^2(\e n)}{\e n}.
$$
The optimal noise magnitudes, per \eqref{eq: optimal noise}, are $\s_j = \log_2^2(\e n)/\e$. In other words, the  noise \textit{does not decay} with the level.

%------------------

Let $\Omega=[0,1]^d$ for $d>1$. The natural hierarchical binary decomposition of $[0,1]^d$
(cut through the middle along a coordinate hyperplane)
makes subintervals of length $\diam(\Omega_\theta) \asymp 2^{-j/d}$ 
for $\theta \in \{0,1\}^j$, so $\Delta_j = 2^j \cdot 2^{-j/d} = 2^{(1-1/d)j}$ for all $j$, 
and the resolution is $\d = 2^{-r/d}$. Thus, 
$$
S = \sum_{j=0}^r \sqrt{\Delta_{j-1}}
\sim 2^{\frac12(1-\frac1d)r}.
$$
Theorem~\ref{thm: optimized accuracy} makes a $\e$-differential private synthetic data
with accuracy
$$
\E W_1 \left( \mu_\XX, \mu_\YY \right) 
\lesssim \frac{2^{(1-\frac1d)r}}{\e n} + 2^{-r/d}.
$$
A nearly optimal choice for the depth of the partition is  $r = \log_2(\e n)$,
which yields
$$
\E W_1 \left( \mu_\XX, \mu_\YY \right) 
\lesssim (\e n)^{-1/d}.
$$
 The optimal noise magnitudes, per \eqref{eq: optimal noise}, are 
$$
\s_j \sim \e^{-1} 2^{\frac12(1-\frac1d)(r-j)}.
$$
Thus, the {\em noise decays} with the level $j$, becoming $O(1)$ per region for the smallest regions.
\end{proof}

\subsection{Proof of Lemma~\ref{lem: flux incomparability}}
\begin{proof}
If $a,b$ are comparable, both values are zero. If $a,b$ is not comparable, we can assume $a_1>b_1$, $a_2<b_2$ without loss of generality. The set of points that are comparable to $b$ is 
\[\{(x_1,x_2)\in \Z_+^2\mid x_1\leq b_1,x_2\leq b_2\}\cup \{(x_1,x_2)\in \Z_+^2\mid x_1\geq b_1,x_2\geq b_2\}.\]
Note that the distance from $a$ to the first set is $\abs{a_1-b_1}$ and the distance from $a$ to the second set is $\abs{a_2-b_2}$. Then $\flux(a,b)$ is the smaller one of the two distances, which is also the distance from $a$ to the union set.
\end{proof}

\subsection{Proof of Lemma~\ref{lem: flux transfer}}
\begin{proof}
{\em Case 1: $a=(a_1,a_2)$ and $b=(b_1,b_2)$ are comparable.}
If $a \preceq b$, remove $b_1-a_1$ balls from bin 1 and $b_2-a_2$ balls from bin 2 to achieve the result. If $b \preceq a$, adding $a_1-b_1$ balls to bin 1 and $a_2-b_2$ balls to bin 2 to achieve the result.

{\em Case 2: $a=(a_1,a_2)$ and $b=(b_1,b_2)$ are incomparable.}
Without loss of generality, we can assume that $a_1-b_1 \ge 0$, $a_2-b_2 \le 0$.

Assume first that $a_1-b_1 \ge b_2-a_2$. Then $\flux(a,b)=b_2-a_2 \coloneqq M$.
Then $\Delta \eqqcolon (a_1+a_2)-(b_1+b_2)>0$. Removing $\Delta$ balls from bin 1 and transferring $M$ balls from bin 1 to bin 2 achieves the result. Note that there are enough balls in bin $1$ to transfer, since $M+\Delta = a_1-b_1 \in [0,a_1]$.

Now assume that $a_1-b_1 \le b_2-a_2$. Then $\flux(a,b)=a_1-b_1 \coloneqq M$.
Then $\Delta \eqqcolon (b_1+b_2)-(a_1+a_2)>0$. Adding $\Delta$ balls to bin 2 and transferring $M$ balls from bin 1 to bin 2 achieves the result. 
\end{proof}

\subsection{Proof of Lemma~\ref{lem: tranform data}}
\begin{proof}
First, we make the total number of points in $\Omega$ correct by adding $m-n$ points to $\Omega$ (or removing, if that number is negative). 

Apply Lemma~\ref{lem: flux transfer} for the two parts of $\Omega$: bin $\Omega_0$ that contains $n_0$ points and bin $\Omega_1$ that contains $n_1$ points. Since $\Omega$ already contains the correct total number of points $m$, we can make the two bins contain the correct number of points, i.e. $m_0$ and $m_1$ respectively, by transferring $\flux \left( (n_0,n_1), \, (m_0,m_1) \right)$ points from one bin to the other.

Apply Lemma~\ref{lem: flux transfer} for the two parts of $\Omega_0$: bin $\Omega_{00}$ that contains $n_{00}$ points and bin $\Omega_{01}$ that contains $n_{01}$ points. 
Since $\Omega_0$ already contains the correct number of points $m_0$, we can make the two bins contain the correct number of points, i.e. $m_{00}$ and $m_{01}$ respectively, by transferring $\flux \left( (n_{00},n_{01}), \, (m_{00},m_{01}) \right)$ points from one bin to the other.

Similarly, since $\Omega_1$ already has the correct number of points $m_1$,  we can make  $\Omega_{10}$ and $\Omega_{11}$ contain the correct number of points $m_{10}$ and $m_{11}$ by transferring $\flux \left( (n_{00},n_{01}), \, (m_{00},m_{01}) \right)$ points from one bin to the other.

Continuing this way, we can complete the proof. 
Note that the steps of the iteration procedure we described are interlocked. 
Each next step determines which subregion the transferred points are selected from, and which subregion they are moved to in the previous step. For example, the original step calls to add (or remove) $m-n$ points to or from $\Omega$, but does not specify how these points are distributed between the two parts $\Omega_0$ and $\Omega_1$. The application of Lemma~\ref{lem: flux transfer} at the next step determines this.
\end{proof}
\subsection{Proof of Lemma \ref{lem: flux and laplacian}}
\begin{proof}
We will derive this result from Lemma~\ref{lem: flux incomparability}.
First, let us compute the distance from $a = (n_{\theta0},n_{\theta1})$ to 
$b' = (n'_{\theta0},n'_{\theta1}) = \left( (n_{\theta0} + \l_{\theta0})_+, \, (n_{\theta1} + \l_{\theta1})_+ \right)$. Since the map $x \mapsto x_+$ is $1$-Lipschitz, we have
$$
\norm{a-b'}_\infty \le \max \left( \abs{\l_{\theta0}}, \abs{\l_{\theta1}} \right).
$$
Furthermore, recall that by Algorithm~\ref{alg: consistency}, $b'$ is comparable to 
$b = (m_{\theta0},m_{\theta1})$. 
An application of Lemma~\ref{lem: flux incomparability} completes the proof.
\end{proof}

\subsection{Proof of Lemma \ref{lem: adding points}}
\begin{proof}
Finding the 1-Wasserstein distance in the discrete case is equivalent to solving the optimal transformation problem. In fact, we can obtain $\mu_U$ from $\mu_V$ by moving $\abs{V \setminus U}$ atoms of $\mu_V$, each having mass $1/\abs{V}$, and distributing their mass uniformly over $U$. The distance for each movement is bounded by $\diam(\Omega)$. Therefore the $1$-Wasserstein distance between $\mu_{U}$ and $\nu_{V}$ is bounded by $\frac{|V\setminus U|}{|V|}\diam(\Omega)$.  
\end{proof}

\section{Discrete Laplacian distribution}\label{sec: appendix discrete}
%====================

Recall that the classical Laplacian distribution $\Lap_\R(\s)$ is a continuous distribution with density 
$$
f(x) = \frac{1}{2\s} \exp \left( -\abs{x}/\s \right),
\quad x \in \R.
$$
A random variable $X \sim \Lap_\R(\s)$ has zero mean and 
$$
\Var(Z) = 2\s^2.
$$
To deal with counts, it is more convenient to use the {\em discrete} 
Laplacian distribution $\Lap_\Z(\s)$, see     \cite{IK}, which has probability mass function
$$
f(z) = \frac{1-p_\s}{1+p_\s} \exp \left( -\abs{z}/\s \right),
\quad z \in \Z
$$
where $p_\s = \exp(-1/\s)$.
A random variable $Z \sim \Lap_\Z(\s)$ has zero mean and 
$$
\Var(Z) = \frac{2p_\s}{(1-p_\s)^2}.
$$
Thus, one can verify that discrete Laplacian has a smaller variance than its continuous counterpart:
\begin{equation}	\label{eq: Lap variance}
\Var(Z) < 2\s^2,
\end{equation}
but the gap vanishes for large $\s$:
$$
\Var(Z) \to 2\s^2 
\quad \text{as } \s \to \infty.
$$
\end{document}